\newtheorem{theorem}{Theorem}
\newtheorem{lemma}{Lemma}
\newcommand{\beq}{\begin{equation}}
\newcommand{\eeq}{\end{equation}}
\newcommand{\beqn}{\begin{eqnarray}}
\newcommand{\eeqn}{\end{eqnarray}}
\newcommand{\beqno}{\begin{eqnarray*}}
\newcommand{\eeqno}{\end{eqnarray*}}
\newcommand{\bma}{\begin{displaymath}}
\newcommand{\ema}{\end{displaymath}}
\newcommand{\bnu}{\begin{enumerate}}
\newcommand{\enu}{\end{enumerate}}
\newcommand{\bce}{\begin{center}}
\newcommand{\ece}{\end{center}}
\newcommand{\btb}{\begin{tabular}}
\newcommand{\etb}{\end{tabular}}
\begin{document}

%
\title{Power Allocation for Full-Duplex Relay Selection in Underlay Cognitive Radio Networks: Coherent versus Non-Coherent Scenarios}

\author{Le~Thanh~Tan,~\IEEEmembership{Member,~IEEE,} Lei~Ying,~\IEEEmembership{Member,~IEEE,} and Daniel W. Bliss,~\IEEEmembership{Fellow,~IEEE}
\thanks{This paper will be presented in part at IEEE CISS 2017.}
\thanks{The authors are with the School of Electrical, Computer and Energy Engineering, Arizona State University,  Tempe, AZ 85287, USA. 
Emails: \{tlethanh,lei.ying.2,d.w.bliss\}@asu.edu. \textbf{L.~ T.~ Tan} is the corresponding author.}}


\maketitle

\begin{abstract}
This paper investigates power control and relay selection in Full Duplex Cognitive Relay Networks (FDCRNs), where the secondary-user (SU) relays can simultaneously receive data from the SU source and forward them to the SU destination. 
We study both non-coherent and coherent scenarios. 
In the non-coherent case, the SU relay forwards the signal from the SU source without regulating the phase; while in the coherent scenario, the SU relay regulates the phase when forwarding the signal to minimize the interference at the primary-user (PU) receiver. 
We consider the problem of maximizing the transmission rate from the SU source to the SU destination subject to the interference constraint at the PU receiver and power constraints at both the SU source and SU relay. 
We then develop a mathematical model to analyze the data rate performance of the FDCRN considering the self-interference effects at
the FD relay.
We develop low-complexity and high-performance joint power control and relay selection algorithms. 
Extensive numerical results are presented to illustrate the impacts of power level parameters and the self-interference cancellation quality on the rate performance.
Moreover, we demonstrate the significant gain of phase regulation at the SU relay.

\end{abstract}

\begin{IEEEkeywords}
Full-duplex cooperative communications, optimal transmit power levels, rate maximization, self-interference control, full-duplex cognitive radios, relay selection scheme, coherent, non-coherent.
\end{IEEEkeywords}
\IEEEpeerreviewmaketitle

\section{Introduction}

Cognitive radio is one of the most promising technologies for addressing today's spectrum shortage \cite{Tan11,Liang11,Zou12}. 
This paper considers {\em underlay} cognitive radio networks where primary and secondary networks transmit simultaneously over the same spectrum under the constraint that the interference caused by the secondary network to the primary network is below a pre-specified threshold \cite{Kim081,Le08}.
Because one critical requirement for the cognitive access design is that transmissions on the licensed frequency band from PUs should  be satisfactorily protected from the SUs' spectrum access.
Therefore, power allocation for SUs should be carefully performed to meet stringent interference requirements in this spectrum sharing model.

In particular, we consider a cognitive relay network where the use of SU relay can significantly increase the transmission rate because of path loss reduction. Most existing research on underlay cognitive radio networks has focused on the design and analysis of cognitive relay networks with half-duplex (HD) relays  (e.g., see \cite{Tan11,Liang11,Zou12} and the references therein). 
Due to the HD constraint, SUs typically require additional resources where the SU relays receive and transmit data on different time slots or on orthogonal channels.
Different from these existing work, this paper considers full-duplex relays, which can transmit and receive simultaneously on the same frequency band \cite{Duarte12, Everett14, Sabharwal14, Day12, Day12b, Bliss07, Korpi14, Choi10, Jain11}. Comparing with HD relays, FD relays can achieve both higher throughput and lower latency with the same amount of spectrum.

Design and analysis of FDCRNs, however, are very different from HDCRNs due to the presence of {\em self-interference}, resulted from the power leakage from the transmitter to the receiver of a FD transceiver \cite{Everett14, Sabharwal14, Korpi14}. 
This self-interference may significantly degrade the communication performance of FDCRNs.
Hence this paper focuses on power control and relay selection in FDCRNs with explicit consideration of self-interference. We assume SU relays use the amplify-and-forward (AF) protocol, and further assume full channel state information in both the non-coherent and coherent scenarios and the transmit phase information in the coherent scenario.
This can be done by using the conventional channel estimation techniques \cite{Arslan07} and implicit/explicit feedback techniques \cite{Mudumbai09} which are beyond the scope of our work.
The contributions of this paper are summarized below.

~ 1) We first consider the power control problem in the non-coherent scenario. We formulate the rate maximization problem where the objective is the transmission rate from the SU source to the SU destination, the constraints include the power constraints at the SU source and SU relay and the interference constraint at the PU receiver, and the optimization variables are the transmit power at the SU source and SU relay. The rate maximization problem is a non-convex optimization problem. However, it becomes convex if we fix one of two optimization variables (i.e., fixing the transmit power at the SU source or the transmit power at the SU relay). Therefore, we propose an alternative optimization algorithm to solve the power control problem. After calculating the achievable rate for each FD relay using the alternative optimization algorithm, the algorithm selects the one with the maximum rate.

~ 2) We then consider the coherent scenario, where in addition to control the transmit power, a SU relay further regulates the phase of the transmitted signal to minimize the interference at the PU receiver. We also formulate a rate maximization problem, which again is nonconvex. For this coherent scenario, we first calculate the phase to minimize the interference at the PU receiver. Then we prove that the power-control problem becomes convex when we fix either the transmit power of the SU source then optimize the transmit power of the SU relay or vice versa. We then propose an alternative optimization method for power control. Based on the achievable rate calculated from the alternative optimization algorithm for each relay, the relay with the maximum rate is selected.

~ 3) Extensive numerical results are presented to investigate the impacts of different parameters on the SU network rate performance and the performance of the proposed power control and relay selection algorithms. From the numerical study, we observe significant rate improvement of FDCRNs compared with HDCRNs. Furthermore, the coherent mechanism yields significantly higher throughput than that under the non-coherent mechanism.

\subsection{Related Work}

The FD technology can improve  spectrum access efficiency in cognitive radio networks \cite{Cheng14, Tan15a, Tan15, Tan16} where SUs can sense and transmit simultaneously. 
\cite{Cheng14} developed an FD MAC protocol that allows simultaneous spectrum access of the SU and PU networks
where both PUs and SUs are assumed to employ the MAC protocol for channel contention resolution and access.
This design is, therefore, not applicable to the hierarchical spectrum access in the CRNs where PUs should have higher spectrum access priority compared to SUs.
In \cite{Tan15a}, the authors propose the FD MAC protocol by using the standard backoff mechanism as
in the 802.11 MAC protocol where the system allows concurrent FD sensing and access during data transmission as well as frame fragmentation.
This design has also been studied in \cite{Tan15, Tan16} for the single- and multi-channel scenarios, respectively.
However all of these existing results assume the interweave spectrum sharing paradigm under which SUs only transmit when PUs are not transmitting.

Moreover, engineering of a cognitive FD relaying network has been considered in \cite{Kim12, Kim15}, where various resource allocation algorithms to improve the outage probability have been proposed.
In \cite{Zhong15}, the authors developed a mathematical model to analyze the outage probability for the proposed FD relay-selection scheme over both independent Nakagami-$m$ and Rayleigh fading channels.
The authors also extended their analysis to two-way FD-based AF relays in underlay cognitive networks \cite{Zhong16} where they analyzed various performance metrics such as outage probability, symbol error probability, etc.
In addition, \cite{Ramirez13} proposed a joint routing and distributed resource allocation for FD wireless networks.
\cite{Choi15} investigated distributed power allocation for a hybrid FD/HD system where all network nodes operate in the HD mode except  the access point (AP).
These existing results focus on either minimizing the outage probability or analyzing performance for existing algorithms.
This paper considers both non-coherent and coherent FU relay nodes and focuses on maximizing SU throughput given interference constraint at the PU and power constraints.

The remaining of this paper is organized as follows. Section ~\ref{SystemModel} describes the system model. Section ~\ref{PAFRS_Problem} studies the dynamic power allocation policies for full-duplex cognitive relay selection to achieve the maximum SU network rate.
Sections ~\ref{PCRS_Configuration_NonCoh} and \ref{PCRS_Configuration_Coh} consider the non-coherent and coherent scenarios, respectively.
Section ~\ref{Results} demonstrates numerical results followed by concluding remarks in Section ~\ref{conclusion}.
A part of this work will be presented at CISS 2017 \cite{Tan17}.

\section{System Models}
\label{SystemModel}

\subsection{System Model}
\label{System}

\begin{figure}[!t]
\centering
\includegraphics[width=85mm]{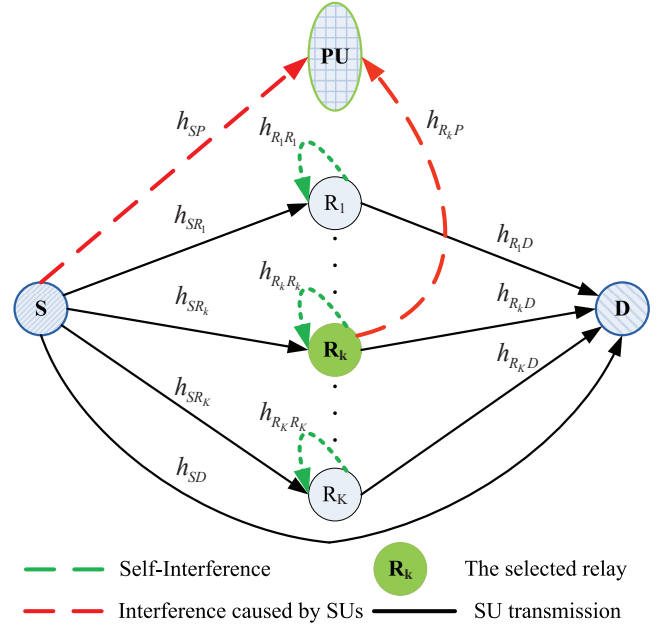}
\caption{System model of power allocation with relay selection for the cognitive full-duplex relay network.}
\label{SCRSN}
\end{figure}

We consider a cognitive relay network which consists of one SU source $S$, $K$ SU relays $R_k$ ($k = 1, \ldots, K$), one SU destination $D$, and one PU receiver $P$.
The system model for the full-duplex cognitive relay network is illustrated in Fig.~\ref{SCRSN}.
The SU relays are equipped with FD transceivers to work in the FD mode while all the other terminals work in the HD mode.
Therefore the receiver performance of each SU relay is affected by the self-interference from its transmitter since the transmit power is leaked into the received signal.

Each SU relay $R_k$ uses the AF protocol, and amplifies the received signal from $S$ with a variable gain $G_k$ and forwards the resulting signal to SU destination, $D$.
We denote $h_{SR_k}$, $h_{R_kD}$, $h_{SD}$, $h_{R_kP}$ and $h_{R_kR_k}$ by the corresponding channel coefficients of links $S \rightarrow R_k$, $R_k \rightarrow D$, $S \rightarrow D$, $R_k \rightarrow P$ and $R_k \rightarrow R_k$ which follows the independent and identically distributed (i.i.d.) Gaussian with the powers of $\sigma_{SR_k}$, $\sigma_{R_kD}$, $\sigma_{SD}$, $\sigma_{R_kP}$ and $\sigma_{R_kR_k}$.
Let  $P_S$ denote the transmit power of SU source $S$.
We also denote by $x_S(t)$, $y_{R_k}(t)$ and $y_D(t)$ the generated signal by the SU source, the transmitted signals at the SU relay and the received signals at the SU destination, respectively.

Let us consider a specific SU relay (say relay $R_k$). Fig.~\ref{SRkCRSN} illustrates the signal processing at the relay.
At time $t$, the received signals at SU relay $R_k$ and SU destination $D$ are as follows:
\beqn
y_1(t) \!=\! h_{SR_k} \sqrt{P_S} x_S(t) \!+\! h_{R_kR_k} \left(y_2(t) \!+ \! \Delta y(t)\right) \!+\! z_{R_k}(t) \label{EQN_receiveRD1}\\
y_D(t) \!=\! h_{R_kD} y_{R_k}(t) + h_{SD} \sqrt{P_S} x_S(t) + z_D (t), \label{EQN_receiveRD2} \hspace{1.2cm}
\eeqn
where $z_{R_k}(t)$ and $z_D (t)$ are the additive white Gaussian noises (AWGN) with zero mean and variances $\sigma_{R_k}^2$ and  $\sigma_D^2$, respectively;
$y_D(t)$ and $y_1(t)$ are the received signals at SU relay $R_k$ and SU destination $D$; and 
$y_2(t)$ is the received signal after the amplification.
In the following, we ignore the direct signal from the SU source to the SU destination (i.e., the second part in equation (\ref{EQN_receiveRD2})). Note that this assumption is has been used in the literature \cite{Day12b, Kim15, Zhong15, Zhong16} when there is attenuation on the direct transmission channel. 

The transmitted signals at SU relay $R_k$ is $$y_{R_k}(t) = y_2(t) + \Delta y(t),$$ where $$y_2(t) = f\left(\hat{y}_1\right) = G_k \hat{y}_1 (t-\Delta).$$
We should note that the SU relay amplifies the signal by a factor of $G_k$ and delays with duration of $\Delta$.
In the noncoherent scenario, $\Delta$ is fixed. In the coherence scenario, the delay $\Delta$ will be optimized  to minimize the interference at the PU receiver. Furthermore, $\Delta y(t)$ is the noise and follows the i.i.d. Gaussian distribution with zero mean and variance of $P_{\Delta} = \zeta P_{R_k}$ \cite{Day12, Day12b, Bliss07}. 
$G_k$ can be expressed as $$G_k = \left[P_S \left|h_{SR_k}\right|^2 + \zeta P_{R_k} \left|h_{R_kR_k}\right|^2 + \sigma_{R_k}^2\right]^{-1/2}.$$
We assume that the channel $h_{R_kR_k}$ is perfectly estimated and hence the received signal after self-interference cancellation is
\beqn
\hat{y}_1(t) \!= \sqrt{P_{R_k}} \left(y_1(t) - h_{R_kR_k} y_2(t)\right) \hspace{2.7cm} \nonumber\\
= \sqrt{P_{R_k}} \! \left[\! h_{SR_k} \! \sqrt{P_S} x_S(t) \!+\! h_{R_kR_k} \Delta y(t) \!+\! z_{R_k}(t) \!\right].
\eeqn
In the equation above, $y_2(t)$ is known at SU relay $R_k$ and therefore is used to cancel the interference.
However, the remaining $h_{R_kR_k} \Delta y(t)$ is still present at the received signals and is called the residual interference.
So we can write the transmitted signals at SU relay $R_k$ as follows:
\beqn
y_{R_k}(t) = G_k h_{SR_k} \sqrt{P_{R_k}} \sqrt{P_S} x_S(t-\Delta) + \Delta y(t) \hspace{1cm} \nonumber\\
\!+ G_k h_{R_kR_k} \!\sqrt{P_{R_k}} \Delta y(t\!-\!\Delta) \!+\! G_k \!\sqrt{P_{R_k}} z_{R_k}(t\!-\!\Delta). \label{EQN_y_Rk_trans}
\eeqn

\begin{figure}[!t]
\centering
\includegraphics[width=95mm]{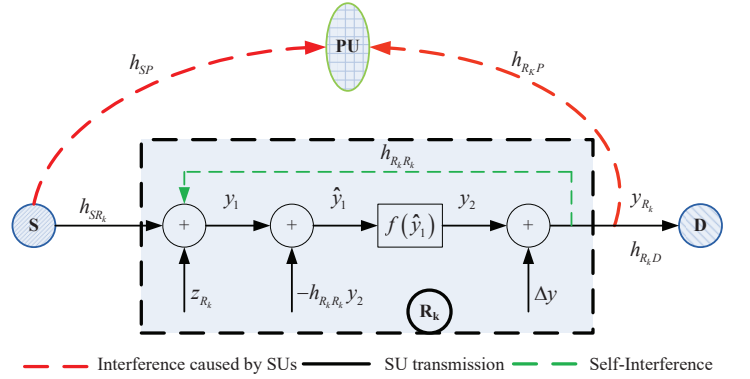} 
\caption{The process at FD relay $k$.}
\label{SRkCRSN}
\end{figure}

\section{Power Control and Relay Selection}
\label{PAFRS_Problem}

In this section, we study the problem of maximizing the rate between SU source and SU destination while protecting the PU via power control and relay selection.
Here the best relay will be selected to help the transmission from the SU source to the SU destination.

\subsection{Problem Formulation}
\label{RateOpt}

Let $\mathcal{C}_k(P_S, P_{R_k})$ denote the achieved rate of the FDCRN with relay $R_k,$ which is the function of transmit power of SU source $S$ and transmit power of SU relay $R_k$.
Assume the interference caused by the SU source and relay, $\mathcal{I}_k$ is required to be at most $\overline{\mathcal{I}}_{P}$ to protect the PU.

Now, the rate maximization problem for the selected relay $k$ can be stated as follows:

\vspace{0.05cm}
\noindent
\textbf{Problem 1:} 
\vspace{0.0cm}
\begin{equation}
\label{EQN_OPTRS}
\begin{array}{l}
 {\mathop {\max }\limits_{P_S, P_{R_k}}} \quad \mathcal{C}_k(P_S, P_{R_k})  \\ 
 \mbox{s.t.}\,\,\,\, \mathcal{I}_k\left(P_S, P_{R_k}\right) \leq \mathcal{\overline I}_{P}, 0 \leq P_S \leq P_S^{\sf max},\\
 \quad \quad 0 \leq P_{R_k} \leq P_{R_k}^{\sf max},\\
 \end{array}\!\!
\end{equation}
where $P_S^{\sf max}$ and $P_{R_k}^{\sf max}$ are the maximum power levels for the SU source and SU relay, respectively.
The first constraint on $\mathcal{I}_k\left(P_S, P_{R_k}\right)$ requires that the interference caused by the SU transmission is limited.
Moreover, the SU relay's transmit power $P_{R_k}$ must be appropriately set to achieve good tradeoff between the rate of the SU network and self-interference mitigation.

Then the relay selection is determined by
\beqn
\label{EQN_OPT_RELAY_SELEC}
k^* = {\mathop  {\arg \max }\limits_{k \in \left\{1,\ldots, K\right\}}} \quad \mathcal{C}_k^*
\eeqn
where $\mathcal{C}_k^*$ is the solution of (\ref{EQN_OPTRS}).
In the following, we show how to calculate the achieved rate, $\mathcal{C}_k(P_S, P_{R_k})$ and the interference imposed by SU transmissions, $\mathcal{I}_k\left(P_S, P_{R_k}\right)$.

\subsection{The Achievable Rate}
\label{Rate_Formu}

When SU relay $R_k$  is selected, the achievable rate of the link $S \rightarrow R_k \rightarrow D$ based on (\ref{EQN_receiveRD1}) and (\ref{EQN_receiveRD2}) is as follows:
\beqn
\mathcal{C}_k = \log_2 \left[1+ \frac{\frac{P_{R_k} \left|h_{R_kD}\right|^2}{\sigma_D^2} \frac{P_S \left|h_{SR_k}\right|^2}{\hat{\zeta} P_{R_k}+\sigma_{R_k}^2}}  {\mathcal{A}}\right]
\eeqn
where
\beqn
\mathcal{A} &=& 1+ \frac{P_{R_k} \left|h_{R_kD}\right|^2}{\sigma_D^2} + \frac{P_S \left|h_{SR_k}\right|^2}{\hat{\zeta} P_{R_k}+\sigma_{R_k}^2} \label{EQN_CAL_A}\\
\hat{\zeta} &=& \left|h_{R_kR_k}\right|^2 \zeta \label{EQN_zeta_hat}
\eeqn
Recall that we assume the direct signal from the SU source to the SU destination is negligible.

\subsection{The Imposed Interference at PU}
\label{Interference_Formu}

We now determine the interference at the PU caused by the CRN. The interference is the signals from the SU source $S$ and the selected relay $R_k:$ 
\beqn
\label{EQN_Inter}
y_I^{\sf PU} (t) \!=\! h_{SP} \sqrt{P_S} x_S(t) \!+\! h_{R_kP} y_{R_k}(t)\! +\! z_P(t)
\eeqn
where $z_P(t)$ is the AWGN with zero mean and variance $\sigma_P^2,$ and $y_{R_k}(t)$ is defined in (\ref{EQN_y_Rk_trans}).

We next derive and analyze the interference in two cases: coherent and non-coherent. In particular, we focus on coherent/non-coherent transmissions from the SU source and the SU relay to the PU receiver. All other transmissions are assumed to be non-coherent for simplicity.  In the coherent scenario, the phase information is needed for a further regulation. This information can be obtained by using methods such as the implicit feedback (using reciprocity between forward and reverse channels in a time-division-duplex system), and explicit feedback (using feedback in a frequency-division-duplex system) \cite{Mudumbai09} or the channel estimation \cite{Arslan07}.

\subsubsection{Non-coherent Scenario}

From (\ref{EQN_Inter}) and (\ref{EQN_y_Rk_trans}), the received interference at the PU caused by the SU source and the selected relay can be written as follows:
\beqn
\mathcal{I}^{\sf non}_k\left(P_S, P_{R_k}\right) = \left|h_{SP}\right|^2 P_S + \left|h_{R_kP}\right|^2 \zeta P_{R_k} \hspace{1.5cm}\nonumber\\
+ G_k^2 \left|h_{R_kP}\right|^2 \! P_{R_k} \left[ \left|h_{SR_k}\right|^2 \! P_S \!+\! \left|h_{R_kR_k}\right|^2 \! \zeta P_{R_k} \!+\! \sigma_{R_k}^2\right]  
\eeqn
After using some simple manipulations, we obtain
\beqn
\mathcal{I}^{\sf non}_k\left(P_S, P_{R_k}\right) = \left|h_{SP}\right|^2 P_S + \left|h_{R_kP}\right|^2 P_{R_k} \left(1+\zeta\right)
\eeqn

\subsubsection{Coherent Scenario}

Combining (\ref{EQN_Inter}) with (\ref{EQN_y_Rk_trans}), the received interference at the PU caused by the SU source and the selected SU relay is
\beqn
\label{EQN_I_k_coh_ori}
\mathcal{\bar{I}}^{\sf coh}_k\left(P_S, P_{R_k}, \phi\right) = \left|A+B e^{-j\phi}\right|^2
\eeqn
where 
\beqn
A = h_{SP} \sqrt{P_S} + h_{R_kP}\sqrt{\zeta P_{R_k}} = \left|A\right| \angle{\phi_A}
\eeqn
\beqn
B = \left(h_{SR_k} \sqrt{P_S} + h_{R_kR_k} \sqrt{\zeta P_{R_k}} + \frac{\sigma_{R_k}}{\sqrt{2}} (1+j)\right) \nonumber\\
\times G_k h_{R_kP} \sqrt{P_{R_k}} = \left|B\right| \angle{\phi_B}
\eeqn
and $\phi = 2 \pi f_s \Delta$, $f_s$ is the sampling frequency.

Before using $\mathcal{\bar{I}}^{\sf coh}_k\left(P_S, P_{R_k}, \phi\right)$ in the constraint of the optimization problem, we can minimize $\mathcal{\bar{I}}^{\sf coh}_k\left(P_S, P_{R_k}, \phi\right)$ over the variable $\phi$ at given $\left(P_S, P_{R_k}\right)$, i.e.,
\vspace{0.05cm}
\noindent
\vspace{0.0cm}
\begin{equation}
\label{EQN_OPT_PHI}
 {\mathop {\min }\limits_{\phi} \quad \mathcal{\bar{I}}^{\sf coh}_k\left(P_S, P_{R_k}, \phi\right) } 
\end{equation}

\vspace{0.2cm}
\noindent

\begin{theorem} \label{theorem1}
The optimal solution to (\ref{EQN_OPT_PHI}) is
\beqn
\label{EQN_I_COH}
\phi_{\sf opt} = \pi + \phi_B - \phi_A \hspace{3cm} \nonumber \\
\mathcal{I}^{\sf coh}_k\left(P_S, P_{R_k}\right) =  \mathcal{\bar{I}}^{\sf coh}_k\left(P_S, P_{R_k}, \phi_{\sf opt}\right) = \left(\left|A\right| - \left|B\right|\right)^2. 
\eeqn
\end{theorem}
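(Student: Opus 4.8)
The plan is to reduce the minimization over $\phi$ to minimizing a single cosine term, after which both claimed identities follow by inspection. First I would expand the squared modulus using $|z|^2 = z z^*$:
\beqno
\left|A + B e^{-j\phi}\right|^2 &=& \left(A + B e^{-j\phi}\right)\left(A^* + B^* e^{j\phi}\right) \\
&=& \left|A\right|^2 + \left|B\right|^2 + A B^* e^{j\phi} + A^* B e^{-j\phi}.
\eeqno
The last two cross terms are complex conjugates of one another, so their sum equals $2\,\mathrm{Re}\!\left(A B^* e^{j\phi}\right)$.

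Next I would insert the polar representations $A = \left|A\right| \angle \phi_A$ and $B = \left|B\right| \angle \phi_B$ supplied in the statement, which give $A B^* = \left|A\right|\left|B\right| e^{j(\phi_A - \phi_B)}$ and hence
\beq
\mathcal{\bar{I}}^{\sf coh}_k\left(P_S, P_{R_k}, \phi\right) = \left|A\right|^2 + \left|B\right|^2 + 2\left|A\right|\left|B\right| \cos\left(\phi + \phi_A - \phi_B\right).
\eeq
At this stage all $\phi$-dependence is confined to the cosine, and since $\left|A\right|, \left|B\right| \geq 0$ the objective is minimized precisely when the cosine attains its least value $-1$.

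Finally I would solve $\cos\left(\phi + \phi_A - \phi_B\right) = -1$, i.e.\ $\phi + \phi_A - \phi_B = \pi$, to obtain $\phi_{\sf opt} = \pi + \phi_B - \phi_A$, and substitute back to get the minimum $\left|A\right|^2 + \left|B\right|^2 - 2\left|A\right|\left|B\right| = \left(\left|A\right| - \left|B\right|\right)^2$, which are exactly the two asserted expressions. The argument needs no convexity or calculus machinery beyond the boundedness of cosine; the only step demanding care is the phase bookkeeping---confirming that conjugating $B$ flips the sign of $\phi_B$ and that the $e^{-j\phi}$ appearing in the definition combines with the $e^{+j\phi}$ from the conjugate cross term so that the net phase inside the cosine is $+\phi$ rather than $-\phi$. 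Since $A$ and $B$ depend on the fixed pair $\left(P_S, P_{R_k}\right)$ only through quantities held constant during this inner minimization, treating them as fixed while optimizing $\phi$ is legitimate, so I expect no genuine obstacle beyond this sign-tracking.
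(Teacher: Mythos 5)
Your proposal is correct and follows essentially the same route as the paper's proof: both reduce the objective to $\left|A\right|^2 + \left|B\right|^2 + 2\left|A\right|\left|B\right|\cos\left(\phi + \phi_A - \phi_B\right)$ and then set the cosine to $-1$, yielding $\phi_{\sf opt} = \pi + \phi_B - \phi_A$ and the minimum $\left(\left|A\right| - \left|B\right|\right)^2$. The only difference is cosmetic: you expand $\left|z\right|^2 = z z^*$ directly in polar form, whereas the paper expands through the real and imaginary parts of $A$ and $\tilde{B} = B e^{-j\phi}$ before recombining into the same cosine expression.
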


\begin{proof} The proof is given in Appendix~\ref{Theo1}. \end{proof}

\section{Power Control and Relay Section in the Non-coherent Scenario}
\label{PCRS_Configuration_NonCoh}

At the SU relay, we assume the self-interference is much higher than the noise, i.e., $\hat{\zeta} P_{R_k} >> \sigma_{R_k}^2$.
Therefore, we omit the term $\sigma_{R_k}^2$ in the object function.
Moreover $\log_2(1+x)$ is a strictly increase function in $x$, so we rewrite \textbf{Problem 1} as

\vspace{0.05cm}
\noindent
\textbf{Problem 2:} 
\vspace{0.0cm}
\begin{equation}
\label{EQN_OPTRS_1}
\begin{array}{l}
 {\mathop {\max }\limits_{P_S, P_{R_k}}} \quad \mathcal{\bar C}_k(P_S, P_{R_k})  \\ 
 \mbox{s.t.}\,\,\,\, \mathcal{I}_k^{\sf non}\left(P_S, P_{R_k}\right) \leq \mathcal{\overline I}_{P},  0 \leq P_S \leq P_S^{\sf max}, \\
 \quad \quad 0 \leq P_{R_k} \leq P_{R_k}^{\sf max},\\
 \end{array}\!\!
\end{equation}
where 
\beqn
\mathcal{\bar C}_k(P_S, P_{R_k}) = \frac{\frac{P_{R_k} \left|h_{R_kD}\right|^2}{\sigma_D^2} \frac{P_S \left|h_{SR_k}\right|^2}{\hat{\zeta} P_{R_k}}}  {\mathcal{\bar A}}
\eeqn
$\mathcal{\bar A}$ is given as
\beqn\label{EQN_A_BAR}
\mathcal{\bar A} = 1+ \frac{P_{R_k} \left|h_{R_kD}\right|^2}{\sigma_D^2} + \frac{P_S \left|h_{SR_k}\right|^2}{\hat{\zeta} P_{R_k}}
\eeqn
and $\hat{\zeta}$ is calculated in (\ref{EQN_zeta_hat}).

We characterize the optimal solutions for \textbf{Problem 2} by the following lemmas.

\begin{lemma} \label{Lemma_noncoh1}
\textbf{Problem 2} is a nonconvex optimization problem for variables $\left(P_S, P_{R_k}\right)$.
\end{lemma}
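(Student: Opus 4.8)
The plan is to exploit the elementary fact that a maximization is a \emph{convex} program only when a concave objective is maximized over a convex feasible set; since I can dispose of the feasible set at once, the whole burden falls on showing that the objective fails to be concave. First I would observe that the feasible set of \textbf{Problem 2} is convex: the interference constraint $\mathcal{I}_k^{\sf non}(P_S,P_{R_k}) = |h_{SP}|^2 P_S + |h_{R_kP}|^2(1+\zeta)P_{R_k} \le \overline{\mathcal{I}}_P$ is affine in $(P_S,P_{R_k})$, and the remaining constraints form the box $0\le P_S\le P_S^{\sf max}$, $0\le P_{R_k}\le P_{R_k}^{\sf max}$. Hence \textbf{Problem 2} is nonconvex if and only if the objective $\mathcal{\bar C}_k$ is not concave on this set.

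Next I would simplify the objective. Writing $a = |h_{R_kD}|^2/\sigma_D^2>0$ and $b=|h_{SR_k}|^2/\hat\zeta>0$, the factor $P_{R_k}$ cancels in the numerator, leaving $\mathcal{\bar C}_k(P_S,P_{R_k}) = abP_S/\bigl(1 + aP_{R_k} + bP_S/P_{R_k}\bigr)$. To prove non-concavity it suffices to exhibit a single line on which the restriction of $\mathcal{\bar C}_k$ is not concave, because the restriction of a concave function to any line is concave. I would fix $P_S = c>0$ and study $\phi(P_{R_k}) := \mathcal{\bar C}_k(c,P_{R_k}) = abc/\bigl(1 + aP_{R_k} + bc/P_{R_k}\bigr)$.

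There are two ways to finish, and I would prefer the second because it pins down an explicit witness. A soft argument: $\phi>0$ on $(0,\infty)$, attains an interior maximum, and $\phi(P_{R_k})\to 0$ as $P_{R_k}\to\infty$; a concave function that is positive and eventually strictly decreasing cannot converge to a finite limit, since its derivative is non-increasing and, once negative, stays bounded away from $0$, forcing $\phi\to-\infty$ and contradicting $\phi>0$. Alternatively, a direct computation of $\partial^2\mathcal{\bar C}_k/\partial P_{R_k}^2$: differentiating $\phi$ twice yields a second derivative whose sign equals that of $a^2P_{R_k}^3 - 3abc\,P_{R_k} - bc$, which is strictly positive once $P_{R_k}$ is large enough. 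As this quantity is the $(2,2)$ entry of the Hessian of $\mathcal{\bar C}_k$, its positivity shows the Hessian is not negative semidefinite, so $\mathcal{\bar C}_k$ is not concave and \textbf{Problem 2} is nonconvex.

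The main obstacle — and the point I would treat with care — is reconciling this with the earlier remark that fixing either variable yields a \emph{convex} subproblem. There is no contradiction: fixing $P_S$ turns the maximization into minimizing the convex denominator $1+aP_{R_k}+bP_S/P_{R_k}$, a convex program even though $\phi$ itself is not concave, so coordinate-wise tractability does not upgrade to joint concavity. The only genuine subtlety is that the feasible box is bounded, so I must ensure the region where the $(2,2)$ Hessian entry is positive actually meets the feasible set; this holds for generic channel gains and non-trivial power budgets, and under the conventional reading of this literature ``\textbf{Problem 2} is nonconvex'' means precisely that $\mathcal{\bar C}_k$ is not a concave function, so checking the sign of the cubic at one interior point suffices.
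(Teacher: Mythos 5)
Your proof is correct, but it takes a genuinely different and more economical route than the paper's. The paper works with $\overline{f} = 1/f$, where $f$ is the (rescaled) denominator of $\mathcal{\bar C}_k$, computes the full $2\times 2$ Hessian of $\overline{f}$, and invokes Sylvester's criterion: in a separate appendix it shows the determinant $\mathcal{H}_{11}\mathcal{H}_{22}-\mathcal{H}_{12}^2$ reduces to a quadratic in $P_S$ with positive leading coefficient and negative constant term, hence is negative for all $P_S \in \left(0,\tilde{P}_{S2}\right)$, making the Hessian indefinite at feasible points with small $P_S$. You avoid the determinant computation entirely: restricting to the line $P_S=c$ and showing that the single diagonal entry $\partial^2 \mathcal{\bar C}_k/\partial P_{R_k}^2$ has the sign of the cubic $a^2P_{R_k}^3-3abc\,P_{R_k}-bc$, which is positive somewhere, already rules out negative semidefiniteness of the Hessian; this is cheaper and equally rigorous. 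Two remarks. First, your ``soft'' asymptotic argument alone would not settle the matter, since it exploits the limit $P_{R_k}\to\infty$, which lies outside the bounded feasible box; only the cubic-sign argument survives this objection, and you correctly flag the issue. Second, your hedge that the witness region meets the feasible set ``for generic channel gains'' is unnecessarily weak: for any fixed $P_{R_k}>0$ the cubic tends to $a^2P_{R_k}^3>0$ as $c=P_S\to 0^+$, so the $(2,2)$ entry is positive at interior feasible points with $P_S$ sufficiently small, for \emph{every} choice of parameters and any nontrivial power/interference budget --- this is exactly the small-$P_S$ regime the paper's determinant argument exploits. With that one-line observation your proof is complete and unconditional.
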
 

\begin{proof} The proof is provided in Appendix~\ref{Lemma01}. \end{proof}

\begin{lemma} 
Given $P_S \in \left[0, P_S^{\sf max}\right]$, \textbf{Problem 2} is a convex optimization problem in terms of $P_{R_k}$. Similarly, given $P_{R_k} \in \left[0, P_{R_k}^{\sf max}\right]$, \textbf{Problem 2} is also a convex optimization problem in terms of $P_S$. \label{lem: nonco-convex}
\end{lemma}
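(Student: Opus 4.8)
The plan is to exploit the separable structure of the objective together with the fact that every constraint of \textbf{Problem 2} is affine. First I would record that the feasible set is the intersection of the half-space $\{|h_{SP}|^2 P_S + |h_{R_kP}|^2(1+\zeta)P_{R_k} \le \mathcal{\overline I}_{P}\}$ with the box $[0,P_S^{\sf max}]\times[0,P_{R_k}^{\sf max}]$; since these constraints are all affine, the feasible region is convex and remains convex when either coordinate is frozen. Hence the entire question reduces to the behaviour of $\mathcal{\bar C}_k$ along each axis. To make this transparent I would introduce the positive constants $a = |h_{R_kD}|^2/\sigma_D^2$ and $b = |h_{SR_k}|^2/\hat{\zeta}$, so that the numerator of $\mathcal{\bar C}_k$ collapses to the single term $ab\,P_S$ and, by (\ref{EQN_A_BAR}), $\mathcal{\bar A} = 1 + a P_{R_k} + bP_S/P_{R_k}$, giving the compact form $\mathcal{\bar C}_k = ab P_S/(1 + aP_{R_k} + bP_S/P_{R_k})$.

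For the first assertion (fix $P_S$, optimise over $P_{R_k}$) I would note that the numerator $abP_S$ is now a positive constant, so $\mathcal{\bar C}_k$ is a strictly decreasing function of its denominator. Maximising $\mathcal{\bar C}_k$ is therefore equivalent to minimising $\mathcal{\bar A}(P_{R_k}) = 1 + aP_{R_k} + bP_S/P_{R_k}$ over $P_{R_k}>0$, and this function is convex because $1+aP_{R_k}$ is affine and $bP_S/P_{R_k}$ is a positive multiple of $1/P_{R_k}$, which is convex on the positive axis. Minimising a convex function over a convex set is a convex program, which proves the claim. I would stress that one should \emph{not} attempt to show that $\mathcal{\bar C}_k$ itself is concave in $P_{R_k}$ — it is not, since its second derivative turns positive for large $P_{R_k}$ — so the monotone reformulation through the denominator is the essential, and slightly subtle, step.

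For the second assertion (fix $P_{R_k}$, optimise over $P_S$) I would instead argue concavity directly. With $P_{R_k}$ fixed, set $\alpha = 1 + aP_{R_k} > 0$ and $\beta = b/P_{R_k} > 0$, so that $\mathcal{\bar C}_k(P_S) = abP_S/(\alpha + \beta P_S)$. Rewriting this as $\frac{ab}{\beta}\left(1 - \frac{\alpha}{\alpha + \beta P_S}\right)$ exhibits it as a constant minus a positive multiple of $1/(\alpha + \beta P_S)$; since $1/(\alpha+\beta P_S)$ is convex in $P_S$ (the convex map $1/x$ composed with an affine map whose argument stays positive for $P_S\ge 0$), $\mathcal{\bar C}_k$ is concave in $P_S$. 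Maximising a concave objective over the convex feasible set is a convex program, which completes the proof.

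The main obstacle, and what keeps the lemma from being purely routine, is the asymmetry between the two directions: the numerator of $\mathcal{\bar C}_k$ depends on $P_S$ but is independent of $P_{R_k}$, which is precisely why freezing $P_S$ leaves a constant numerator (handled by minimising the convex denominator) while freezing $P_{R_k}$ leaves a linear-fractional objective (handled by direct concavity). The point to guard against is the tempting but false claim that $\mathcal{\bar C}_k$ is concave in $P_{R_k}$; recognising that the $P_{R_k}$-axis must be treated via the denominator reformulation is the crux of the argument.
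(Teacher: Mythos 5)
Your proposal is correct, and its computational core coincides with the paper's: both reduce everything to the denominator $\mathcal{\bar A}$ in (\ref{EQN_A_BAR}) (the paper works with $f\left(P_S,P_{R_k}\right)=\mathcal{\bar A}/P_S$ after cancelling $P_{R_k}$ in the numerator, so that $\mathcal{\bar C}_k=ab/f$ with your constants $a,b$), verify positivity of the relevant second derivatives $\frac{\partial^2 f}{\partial P_{R_k}^2}=\frac{2\left|h_{SR_k}\right|^2}{\hat{\zeta}P_{R_k}^3}>0$ and $\frac{\partial^2 f}{\partial P_S^2}=\frac{2}{P_S^3}+\frac{2\left|h_{R_kD}\right|^2P_{R_k}}{P_S^3\sigma_D^2}>0$, and note that all constraints are affine. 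The genuine difference lies in how convexity of the denominator is converted into a statement about the problem. The paper asserts that to prove $\mathcal{\bar C}_k$ concave it suffices to prove $f$ convex, i.e., it passes from convexity of $f$ to concavity of $\mathrm{const}/f$; that implication is false in general for reciprocals of positive convex functions, and, as you correctly observe, $\mathcal{\bar C}_k$ is in fact \emph{not} concave in $P_{R_k}$ (for large $P_{R_k}$ it behaves like $bP_S/P_{R_k}$, which is convex). Your argument repairs this: in the $P_{R_k}$ direction you invoke the monotone reformulation (maximizing a positive constant divided by $f$ is equivalent to minimizing the convex function $f$), which is exactly the justification the paper's wording lacks; in the $P_S$ direction you prove genuine concavity directly from the linear-fractional form $abP_S/\left(\alpha+\beta P_S\right)$, which is consistent with the paper since there $f=\gamma/P_S+\delta$ and its reciprocal happens to be concave. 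So the two proofs share their calculus, but yours is the more rigorous one: it makes explicit that in the $P_{R_k}$ direction the lemma holds in the sense of equivalence to a convex program, while in the $P_S$ direction it is a bona fide concave maximization.
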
 

\begin{proof} The proof can be found in Appendix~\ref{Lemma1}. \end{proof}

Since \textbf{Problem 2} is non-convex, we exploit alternating-optimization problem (according to Lemma \ref{lem: nonco-convex}, the problem is convex when we fix one variable and optimize the other) to solve \textbf{Problem 2}, where each step is a convex optimization problem and can be solved using standard approaches \cite{Boyd04}. Finally, we determine the best relay by solving (\ref{EQN_OPT_RELAY_SELEC}).


We now consider the special case of ideal self-interference cancellation, i.e., $\hat{\zeta} = 0$.
We characterize the optimal solutions for \textbf{Problem 1} in the special case by the following lemma. 

\begin{lemma} \label{Lemma_noncoh1_zeta0}
\textbf{Problem 1} is a convex optimization problem for variables $\left(P_S, P_{R_k}\right)$ when $\hat{\zeta} = 0$. 
\end{lemma}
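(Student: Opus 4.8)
The plan is to verify the two defining ingredients of a convex maximization program separately: that the feasible set is convex, and that the objective $\mathcal{C}_k$ is concave on it. First I would dispatch the constraints. With $\hat{\zeta}=0$ the interference term reduces (exactly as already derived in the non-coherent analysis) to $\mathcal{I}_k^{\sf non}(P_S,P_{R_k}) = \left|h_{SP}\right|^2 P_S + \left|h_{R_kP}\right|^2(1+\zeta)P_{R_k}$, which is affine in $(P_S,P_{R_k})$. Together with the box constraints $0\le P_S\le P_S^{\sf max}$ and $0\le P_{R_k}\le P_{R_k}^{\sf max}$, the feasible region is an intersection of half-spaces, hence a convex polytope. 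All the real content therefore sits in the objective.

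Next I would strip the objective down to a two-variable concavity check. Since $\log_2(1+x)$ is strictly increasing and $\hat{\zeta}=0$ removes the $P_{R_k}$-dependence from the $S\to R_k$ signal-to-noise term, introduce the nonnegative quantities $u := P_{R_k}\left|h_{R_kD}\right|^2/\sigma_D^2$ and $v := P_S\left|h_{SR_k}\right|^2/\sigma_{R_k}^2$, each affine in the decision variables. Simplifying $\mathcal{A}$ and the numerator collapses the rate to
\[
\mathcal{C}_k = \log_2\frac{(1+u)(1+v)}{1+u+v} = \log_2(1+u)+\log_2(1+v)-\log_2(1+u+v).
\]
Because $u$ and $v$ are affine in $(P_S,P_{R_k})$, concavity of $\mathcal{C}_k$ in the powers is equivalent to concavity of $f(u,v)$ on the nonnegative quadrant, so it suffices to examine the $2\times 2$ Hessian of $f$.

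The crux is establishing that this Hessian is negative semidefinite. The diagonal entries are $f_{uu}=(1+u+v)^{-2}-(1+u)^{-2}$ and $f_{vv}=(1+u+v)^{-2}-(1+v)^{-2}$, each manifestly nonpositive for $u,v\ge 0$ since $1+u+v$ dominates both $1+u$ and $1+v$; the off-diagonal entry is $f_{uv}=(1+u+v)^{-2}$. The substantive step is the determinant condition $f_{uu}f_{vv}-f_{uv}^2\ge 0$, which I would handle by clearing the common denominator $(1+u)^2(1+v)^2(1+u+v)^2$ and exploiting the identity $1+u+v=(1+u)+(1+v)-1$ to reduce the numerator to a single polynomial in $u$ and $v$. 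I expect this determinant step to be the main obstacle: the reduction leaves a comparison of $uv$ against a fixed constant, and the delicate point is to confirm that this inequality genuinely holds on the whole feasible quadrant rather than only in a high-power regime. If it does, negative semidefiniteness of the Hessian follows, $\mathcal{C}_k$ is concave, and combining this with the convex feasible set from the first step establishes that Problem 1 with $\hat{\zeta}=0$ is a convex program.
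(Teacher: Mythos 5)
Your constraint analysis and your reduction of the objective to $f(u,v)=\log_2(1+u)+\log_2(1+v)-\log_2(1+u+v)$, with $u,v$ affine in $(P_S,P_{R_k})$, are correct, and so are your Hessian entries. But the determinant condition you deferred --- and correctly identified as the crux --- actually fails. Carrying out the computation you outline (write $s=1+u$, $t=1+v$, $w=1+u+v=s+t-1$) gives, up to the positive factor $1/(\ln 2)^2$,
\begin{equation*}
f_{uu}f_{vv}-f_{uv}^{2}
=\frac{w^{2}-s^{2}-t^{2}}{s^{2}t^{2}w^{2}}
=\frac{2uv-1}{(1+u)^{2}(1+v)^{2}(1+u+v)^{2}},
\end{equation*}
which is strictly negative whenever $uv<1/2$. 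In particular, at $u=v=0$ the scaled Hessian is $\bigl(\begin{smallmatrix}0 & 1\\ 1 & 0\end{smallmatrix}\bigr)$, which is indefinite: near zero power the exact rate behaves, to second order and up to a positive constant, like the saddle function $uv$. Since the feasible set of \textbf{Problem 1} always contains a neighborhood of the origin (zero power satisfies every constraint), the exact objective is \emph{not} jointly concave on the feasible region even when $\hat{\zeta}=0$, so your argument cannot be completed: the inequality you hoped holds ``on the whole feasible quadrant'' holds only where $uv\ge 1/2$.

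The paper avoids exactly this low-SNR obstruction by an extra step you did not take: before checking concavity, it invokes the high-SNR approximation (citing \cite{Tse05}) and drops the additive $1$, so the quantity to be maximized becomes the harmonic-mean form $uv/(u+v)=\left(1/u+1/v\right)^{-1}$, i.e., the reciprocal of $\tilde{f}(P_S,P_{R_k})=\frac{|h_{R_kD}|^2}{P_S\sigma_D^2}+\frac{|h_{SR_k}|^2}{\sigma_{R_k}^2P_{R_k}}$. The paper then verifies that the Hessian of $1/\tilde{f}$ has nonpositive diagonal and identically zero determinant, hence is negative semidefinite on the whole positive quadrant. (The same conclusion follows in your framework: with all three ``$+1$''s removed, the determinant numerator becomes $(u+v)^2-u^2-v^2=2uv\ge 0$.) So what the paper actually proves is convexity of the \emph{approximated} problem; the statement for the exact objective, which is what you attempted, is false in the low-power corner of the feasible set. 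To salvage your route you must either adopt the same high-SNR approximation, or restrict the feasible region to the convex set $\{uv\ge 1/2\}$, where your determinant is indeed nonnegative.
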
 

\begin{proof} The proof is provided in Appendix~\ref{Lemma01_zeta0}. \end{proof}

Based on Lemma ~\ref{Lemma_noncoh1_zeta0}, we can solve \textbf{Problem 1} when $\hat{\zeta} = 0$ by using fundamental methods \cite{Boyd04}.

\section{Power Control and Relay Selection in the Coherent Scenario}
\label{PCRS_Configuration_Coh}

Again, we assume that the self-interference is much higher than the noise at the selected relay, i.e., $\hat{\zeta} P_{R_k} >> \sigma_{R_k}^2$.
\textbf{Problem 1} can thus be reformulated as

\vspace{0.05cm}
\noindent
\textbf{Problem 3:} 
\vspace{0.0cm}
\begin{equation}
\label{EQN_OPTRS_2}
\begin{array}{l}
 {\mathop {\max }\limits_{P_S, P_{R_k}}} \quad \mathcal{\bar C}^{\sf coh}_k(P_S, P_{R_k})  \\ 
 \mbox{s.t.}\,\,\,\, \mathcal{I}_k^{\sf coh}\left(P_S, P_{R_k}\right) \leq \mathcal{\overline I}_{P}, 0 \leq P_S \leq P_S^{\sf max}, \\
 \quad \quad 0 \leq P_{R_k} \leq P_{R_k}^{\sf max},\\
 \end{array}\!\!
\end{equation}  
where 
\beqn
\mathcal{\bar C}^{\sf coh}_k(P_S, P_{R_k}) = \frac{\frac{P_{R_k} \left|h_{R_kD}\right|^2}{\sigma_D^2} \frac{P_S \left|h_{SR_k}\right|^2}{\hat{\zeta} P_{R_k}}}  {1+ \frac{P_{R_k} \left|h_{R_kD}\right|^2}{\sigma_D^2} + \frac{P_S \left|h_{SR_k}\right|^2}{\hat{\zeta} P_{R_k}}}
\eeqn
and $\hat{\zeta}$ is calculated in (\ref{EQN_zeta_hat}).

To solve \textbf{Problem 3}, the new variables are introduced as $p_S = \sqrt{P_S}$ and $p_{R_k} = \sqrt{P_{R_k}}$.
Hence \textbf{Problem 3} can be equivalently formulated as

\vspace{0.05cm}
\noindent
\textbf{Problem 4:} 
\vspace{0.0cm}
\begin{equation}
\label{EQN_OPTRS_3}
\begin{array}{l}
 {\mathop {\max }\limits_{p_S, p_{R_k}}} \quad \mathcal{\breve{C}}^{\sf coh}_k(p_S, p_{R_k})  \\ 
 \mbox{s.t.}\,\,\,\, \mathcal{I}_k^{\sf coh}\left(P_S, P_{R_k}\right) \leq \mathcal{\overline I}_{P},  0 \leq p_S \leq \sqrt{P_S^{\sf max}}, \\
 \quad \quad 0 \leq p_{R_k} \leq \sqrt{P_{R_k}^{\sf max}},\\
 \end{array}\!\!
\end{equation} 
where the objective function is written as
\beqn
\mathcal{\breve{C}}^{\sf non}_k(p_S, p_{R_k}) = \frac{\frac{p_{R_k}^2 \left|h_{R_kD}\right|^2}{\sigma_D^2} \frac{p_S^2 \left|h_{SR_k}\right|^2}{\hat{\zeta} p_{R_k}^2}}  {1+ \frac{p_{R_k}^2 \left|h_{R_kD}\right|^2}{\sigma_D^2} + \frac{p_S^2 \left|h_{SR_k}\right|^2}{\hat{\zeta} p_{R_k}^2}}
\eeqn

We give a characterization of optimal solutions for \textbf{Problem 4} by the following lemmas.

\begin{lemma} \label{Lemma_coh1} \textbf{Problem 4} is not a convex optimization problem for variable $\left(p_S,p_{R_k}\right)$. \end{lemma}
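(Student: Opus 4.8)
The plan is to show that Problem 4 fails to be a convex program by establishing that its objective is not jointly concave in $(p_S, p_{R_k})$. Since Problem 4 is a maximization over the convex box $\{0 \le p_S \le \sqrt{P_S^{\sf max}},\, 0 \le p_{R_k} \le \sqrt{P_{R_k}^{\sf max}}\}$, concavity of the objective is necessary for the problem to be convex, so it suffices to exhibit a point where concavity fails.

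First I would simplify the objective. The factor $p_{R_k}^2$ appearing in the numerator cancels against the $p_{R_k}^2$ in the denominator of the self-interference term, so that, writing $a = |h_{R_kD}|^2/\sigma_D^2 > 0$ and $b = |h_{SR_k}|^2/\hat{\zeta} > 0$, the objective collapses to
\beq
g(p_S, p_{R_k}) = \frac{a b\, p_S^2}{1 + a\, p_{R_k}^2 + b\, p_S^2/p_{R_k}^2}.
\eeq
The key observation, which makes the counterexample clean, is that the numerator no longer depends on $p_{R_k}$. I would then invoke the standard fact that a function is concave on its domain only if its restriction to every line segment contained in the domain is concave; hence it is enough to analyze a single horizontal slice rather than the full $2\times 2$ Hessian.

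Fixing $p_{R_k} = y$ for any $y \in (0, \sqrt{P_{R_k}^{\sf max}}]$ and writing $x = p_S$, the restriction takes the form $\psi(x) = a b\, x^2/(c + d\, x^2)$ with $c = 1 + a y^2 > 0$ and $d = b/y^2 > 0$. A direct differentiation then yields
\beq
\psi''(x) = \frac{2 a b c\,(c - 3 d\, x^2)}{(c + d\, x^2)^3},
\eeq
which is strictly positive whenever $x^2 < c/(3d)$, and in particular $\psi''(0) = 2ab/c > 0$. Since a concave function cannot have a strictly positive second derivative at an interior point of its domain, $\psi$ is not concave on $[0,\sqrt{P_S^{\sf max}}]$, and therefore $g$ is not jointly concave on the feasible box. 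Consequently Problem 4 is not a convex optimization problem.

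The only point that requires a moment of care is checking that the region $x^2 < c/(3d)$ actually intersects the feasible interval $[0,\sqrt{P_S^{\sf max}}]$ for the chosen slice; this is immediate because the region contains a full neighborhood of $x = 0$, which lies in the box for every admissible $y$. Thus the genuine work is the initial algebraic collapse of the objective and the choice of the right one-dimensional restriction, after which the sign of $\psi''$ near the origin settles the claim without any eigenvalue computation.
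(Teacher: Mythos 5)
Your proof is correct, and it takes a genuinely different, considerably more economical route than the paper's. The paper writes the objective as a positive constant times $1/g$ with $g(p_S,p_{R_k})=\frac{1}{p_S^2}+\frac{|h_{R_kD}|^2 p_{R_k}^2}{\sigma_D^2 p_S^2}+\frac{|h_{SR_k}|^2}{\hat{\zeta}p_{R_k}^2}$, computes all four entries of the Hessian of $1/g$, and invokes Sylvester's criterion, which forces a further appendix establishing $\mathcal{G}_{11}>0$ and $\mathcal{G}_{22}<0$ on a region $p_S\in[0,\tilde{p}_S)$, $p_{R_k}\in[0,\tilde{p}_{R_k})$ via root analysis of a quadratic. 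You instead cancel the $p_{R_k}^2$ factor, restrict to the slice $p_{R_k}=y>0$, and show the one-dimensional restriction $\psi(x)=abx^2/(c+dx^2)$ has $\psi''>0$ for $x^2<c/(3d)$; since joint concavity must survive restriction to every line segment in the domain, this single second-derivative computation already rules out concavity of the objective and hence convexity of the problem. Your slice computation is precisely the content of the paper's $\mathcal{G}_{11}>0$ claim, and a positive diagonal Hessian entry at an interior point by itself contradicts negative semidefiniteness, so the paper's $\mathcal{G}_{22}$ analysis and Sylvester's criterion are logically superfluous for this lemma; what the heavier computation buys the paper is the stronger fact that the Hessian is indefinite (the objective is neither concave nor convex), which the lemma as stated does not need. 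Two minor points to tighten: $x=0$ is a boundary point of the box, so appeal to $\psi''>0$ at interior points $0<x<\min\left\{\sqrt{c/(3d)},\sqrt{P_S^{\sf max}}\right\}$ rather than to $\psi''(0)$ itself; and the feasible set also includes the constraint $\mathcal{I}^{\sf coh}_k\leq\mathcal{\bar{I}}_P$, so you should remark that concavity fails at feasible points --- immediate, since the interference vanishes as $(p_S,p_{R_k})\to(0,0)$, and the paper's own proof glosses over the same issue. Finally, it is worth flagging that your (correct) computation shows the objective is strictly convex in $p_S$ near $p_S=0$ for every fixed $p_{R_k}>0$, which contradicts the paper's Lemma~\ref{lem: co-convex}; the flaw there is the paper's inference that $g$ positive and convex implies $1/g$ concave, which is false in general, and your slice function is exactly a counterexample to it.
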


\begin{proof} The proof is in Appendix~\ref{Lemma02}. \end{proof}

\begin{lemma} Given $p_S \in \left[0, \sqrt{P_S^{\sf max}}\right]$, \textbf{Problem 4} is a convex optimization problem for variable $p_{R_k}$. Similarly, given $p_{R_k} \in \left[0, \sqrt{P_{R_k}^{\sf max}}\right]$, \textbf{Problem 4} is also a convex optimization problem for variable $p_S$. \label{lem: co-convex}\end{lemma}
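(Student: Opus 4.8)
The plan is to treat the two directions separately and, in each, to verify the two ingredients of a convex program: an objective that (after a monotone reformulation) reduces to minimizing a convex function, and a feasible set that is a single interval. First I would simplify the objective. Cancelling the $p_{R_k}^2$ that appears in both the second numerator factor and its denominator, $\mathcal{\breve{C}}^{\sf coh}_k$ collapses to $\frac{a\,b\,p_S^2}{1 + a\,p_{R_k}^2 + b\,p_S^2/p_{R_k}^2}$ with $a = |h_{R_kD}|^2/\sigma_D^2 > 0$ and $b = |h_{SR_k}|^2/\hat{\zeta} > 0$. When $p_S$ is fixed the numerator is a positive constant, so maximizing the ratio is equivalent to minimizing the denominator $1 + a\,p_{R_k}^2 + b\,p_S^2\,p_{R_k}^{-2}$, a sum of a quadratic and a positive multiple of $p_{R_k}^{-2}$, hence convex on $p_{R_k} > 0$. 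When $p_{R_k}$ is fixed the objective has the form $C_3\,p_S^2/(C_1 + C_2\,p_S^2)$ with $C_1,C_2,C_3 > 0$; this is strictly increasing in $p_S$, so maximizing it is equivalent to maximizing the linear function $p_S$. In both directions the objective therefore reduces to a convex program, and the only remaining question is convexity of the feasible set.

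For the constraint I would invoke Theorem~\ref{theorem1}, which gives $\mathcal{I}_k^{\sf coh} = (|A| - |B|)^2$. After the substitution $p_S = \sqrt{P_S}$, $p_{R_k} = \sqrt{P_{R_k}}$, the term $A = h_{SP}\,p_S + \sqrt{\zeta}\,h_{R_kP}\,p_{R_k}$ is complex-affine in $(p_S,p_{R_k})$, so $|A|$ is convex; and because the constituent terms of $B$ add incoherently, $G_k$ cancels the bracketed factor exactly, leaving $|B| = |h_{R_kP}|\,p_{R_k}$, which is linear. This is precisely the payoff of introducing the square-root variables: the interference is expressed through a convex $|A|$ and a linear $|B|$. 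The box constraints are already intervals, so everything hinges on whether $\{(|A|-|B|)^2 \le \overline{\mathcal{I}}_P\}$ is an interval once one variable is frozen.

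When $p_S$ is fixed, set $g(p_{R_k}) = |A| - |B|$. It is convex (a convex function minus a linear one), and since $\hat{\zeta}$, and hence $\zeta$, is a small residual fraction with $\zeta < 1$, its limiting slope is $(\sqrt{\zeta}-1)|h_{R_kP}| < 0$; a convex function whose derivative is everywhere bounded above by this negative limiting slope is strictly decreasing. Hence $\{g^2 \le \overline{\mathcal{I}}_P\} = g^{-1}\!\big([-\sqrt{\overline{\mathcal{I}}_P},\,\sqrt{\overline{\mathcal{I}}_P}]\big)$ is the preimage of an interval under a strictly monotone map, and is therefore itself an interval; intersecting with the box preserves this, and minimizing the convex denominator over the resulting interval is a convex program.

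The hard part is the direction in which $p_{R_k}$ is fixed, because then $|B|$ is a constant $c = |h_{R_kP}|\,p_{R_k}$ while $|A(p_S)|$ is convex but generally U-shaped. The constraint $(|A|-c)^2 \le \overline{\mathcal{I}}_P$ is the intersection of the convex upper bound $|A| \le c + \sqrt{\overline{\mathcal{I}}_P}$ with the lower bound $|A| \ge c - \sqrt{\overline{\mathcal{I}}_P}$, and the latter is a superlevel set of a convex function, which need not be convex: if $|A|$ dipped below $c - \sqrt{\overline{\mathcal{I}}_P}$ the feasible region would split into two intervals. I would close this gap by arguing that the lower bound is inactive in the regime of interest --- since the objective is strictly increasing in $p_S$ and $|A(p_S)| \to \infty$, the optimizer lies on the upper, convex branch, so it suffices to optimize over the single interval cut out by $|A| \le c + \sqrt{\overline{\mathcal{I}}_P}$ and the box. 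Equivalently, when $\mathrm{Re}(h_{SP}\,h_{R_kP}^{*}) \ge 0$ the map $|A(p_S)|$ is monotone and the two-sided constraint again has interval preimage. Showing cleanly that the \emph{retained} feasible set is convex, and not merely that the active constraint is, is the step that needs the most care.
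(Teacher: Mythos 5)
Your treatment of the objective coincides with the paper's own (Appendix~\ref{Lemma2}): there too the problem is reduced to minimizing the denominator $g\left(p_S,p_{R_k}\right)$, whose second partial derivatives $\frac{\partial^2 g}{\partial p_{R_k}^2}$ and $\frac{\partial^2 g}{\partial p_S^2}$ are positive. The genuine gap is in the constraint, in exactly the step you concede at the end. The paper never faces your two-branch difficulty because it does not argue on the form $\left(\left|A\right|-\left|B\right|\right)^2$ with a two-sided bound on $\left|A\right|$. In Appendix~\ref{Lemma5} it (i) linearizes $\left|B\right|$ by bounding the cross term $L$ with Cauchy--Schwarz and then setting $L \approx 2G_k^{-2}$, which gives $\left|B\right| = \sqrt{3}\left|h_{R_kP}\right| p_{R_k}$ (note the factor $\sqrt{3}$, not your factor $1$: the cross terms are pushed to their maximum, not dropped as ``incoherent''), and (ii) rewrites the phase-optimized interference as a sum of squares of real linear functions, $\mathcal{\bar{I}}^{\sf coh}_k = \left(h_{SP}^R p_S + F_1 p_{R_k}\right)^2 + \left(h_{SP}^I p_S + F_2 p_{R_k}\right)^2$, with $F_1,F_2$ treated as constants. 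Its second partial derivatives, $2\left|h_{SP}\right|^2$ and $2F_1^2+2F_2^2$, are positive constants, so the constraint function is convex in each variable and the feasible set is the sublevel set of a convex quadratic --- a single interval in either frozen direction, with no branch selection, no monotonicity of $\left|A\right|$, and no sign condition on $\mathrm{Re}\left(h_{SP}h_{R_kP}^*\right)$.

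Your attempted patches do not prove the statement: arguing that the optimizer lies on the ``upper branch'' shows only that a convex restriction shares its optimum with the original problem, not that \textbf{Problem 4} with $p_{R_k}$ fixed is itself a convex program, which is what the lemma asserts; and the fallback condition $\mathrm{Re}\left(h_{SP}h_{R_kP}^*\right)\ge 0$ fails for half of the complex-Gaussian channel realizations. To be fair, your obstruction points at something real: the paper's $F_1,F_2$ contain $\phi_B-\phi_{\sf opt}=\phi_A-\pi$, which depends on $\left(p_S,p_{R_k}\right)$, so the sum-of-squares representation itself rests on freezing the phases, and with the exact $\left(\left|A\right|-\left|B\right|\right)^2$ the sublevel set can indeed split into two intervals, as you observe. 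But judged as a reconstruction of the paper's proof, the missing idea is precisely that linearized sum-of-squares representation of $\mathcal{\bar{I}}^{\sf coh}_k$; without it, the fixed-$p_{R_k}$ half of the lemma remains unproved in your write-up.
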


\begin{proof} The proof is provided in Appendix~\ref{Lemma2}. \end{proof}

Based on Lemma \ref{lem: co-convex}, we again develop the alternating-optimization strategy to solve \textbf{Problem 4}, where each step is a convex optimization problem and can be solved using basic approaches \cite{Boyd04}. The relay selection is then determined by solving (\ref{EQN_OPT_RELAY_SELEC}).


We now investigate the special case of ideal self-interference cancellation, i.e., $\hat{\zeta} = 0$.
We then characterize the optimal solutions for \textbf{Problem 4} in the special case by the following lemma.

\begin{lemma} \label{Lemma_coh2_zeta0}
\textbf{Problem 4} is a convex optimization problem for variables $\left(P_S, P_{R_k}\right)$ when $\hat{\zeta} = 0$.
\end{lemma}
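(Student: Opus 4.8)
The plan is to prove that \textbf{Problem 4} becomes convex in the joint variable $\left(P_S, P_{R_k}\right)$ under the ideal self-interference cancellation assumption $\hat{\zeta} = 0$. First I would examine how setting $\hat{\zeta} = 0$ simplifies both the objective and the interference constraint. In the objective $\mathcal{\bar C}^{\sf coh}_k$, the term $\frac{P_S \left|h_{SR_k}\right|^2}{\hat{\zeta} P_{R_k}}$ blows up as $\hat{\zeta} \to 0$; the correct interpretation is to retain the noise term $\sigma_{R_k}^2$ that was dropped under the $\hat{\zeta} P_{R_k} \gg \sigma_{R_k}^2$ assumption, so that the relevant ratio becomes $\frac{P_S \left|h_{SR_k}\right|^2}{\sigma_{R_k}^2}$, which is linear in $P_S$. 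Thus I would first restate the rate expression in its $\hat{\zeta}=0$ form, where both $\frac{P_{R_k}\left|h_{R_kD}\right|^2}{\sigma_D^2}$ and $\frac{P_S\left|h_{SR_k}\right|^2}{\sigma_{R_k}^2}$ are linear functions of the optimization variables.

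The key structural observation is that the objective takes the form $g(u,v)=\frac{uv}{1+u+v}$ where $u$ and $v$ are each affine (indeed linear) in the decision variables $P_S$ and $P_{R_k}$ respectively. Since an affine map composed on the inside preserves concavity, the whole problem reduces to showing that $g(u,v)=\frac{uv}{1+u+v}$ is jointly concave on the nonnegative orthant $u,v\geq 0$. The plan is to establish this concavity directly by computing the Hessian of $g$ and checking that it is negative semidefinite, i.e., that the diagonal second partials are nonpositive and the determinant of the $2\times 2$ Hessian is nonnegative throughout the feasible region. Because we maximize, concavity of the objective is exactly what is needed. Alternatively, I would note that $g$ is the harmonic-mean-type expression $\left(\frac{1}{u}+\frac{1}{v}+\frac{1}{uv}\right)^{-1}$, which is a known concave function, but the cleanest self-contained route is the explicit Hessian computation.

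Next I would handle the constraints. The box constraints $0 \leq P_S \leq P_S^{\sf max}$ and $0 \leq P_{R_k} \leq P_{R_k}^{\sf max}$ are linear and hence define a convex feasible set. For the interference constraint, by Theorem~\ref{theorem1} we have $\mathcal{I}^{\sf coh}_k = \left(\left|A\right|-\left|B\right|\right)^2$; I would substitute $\hat{\zeta}=0$ (equivalently $\zeta=0$ in the relevant self-interference terms) into $A$ and $B$ and argue that the resulting constraint function is convex in $\left(P_S, P_{R_k}\right)$ — most likely because $\left|A\right|$ and $\left|B\right|$ reduce to expressions that are affine (or convex) in $\sqrt{P_S}$ and $\sqrt{P_{R_k}}$, and after the substitution the difference-of-magnitudes squared retains convexity on the feasible region. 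Since a convex objective-maximization requires concave objective and convex constraints, and we have a concave objective over a convex feasible set, the problem is a convex program.

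The main obstacle I expect is the interference constraint rather than the objective: verifying convexity of $\left(\left|A\right|-\left|B\right|\right)^2$ in the natural variables is delicate because it involves magnitudes of complex affine expressions and a difference of two such magnitudes, which is not convex in general. The argument will hinge on the specific simplification that $\hat{\zeta}=0$ produces, and I would need to check carefully whether the convexity claim holds in $\left(P_S, P_{R_k}\right)$ directly or whether it is more naturally stated in the substituted variables $\left(p_S, p_{R_k}\right)=\left(\sqrt{P_S},\sqrt{P_{R_k}}\right)$ as in \textbf{Problem 4}; reconciling the lemma's stated variables $\left(P_S,P_{R_k}\right)$ with the $p$-variable formulation of the problem is the subtle bookkeeping step. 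The concavity of the objective, by contrast, is a routine Hessian calculation once the affine substitution is recognized.
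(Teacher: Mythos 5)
Your proposal's central step fails. The claim that $g(u,v)=\frac{uv}{1+u+v}$ (with $u,v$ affine in the decision variables) is jointly concave on the nonnegative orthant is false: a direct computation gives
\[
\frac{\partial^2 g}{\partial u^2}\,\frac{\partial^2 g}{\partial v^2}-\left(\frac{\partial^2 g}{\partial u\,\partial v}\right)^{2}
=\frac{4uv(1+u)(1+v)-\left(1+u+v+2uv\right)^{2}}{(1+u+v)^{6}}
=-\frac{1}{(1+u+v)^{4}}<0,
\]
so the Hessian is indefinite \emph{everywhere}; for instance, along $u=v=t$ the restriction $t^{2}/(1+2t)$ has second derivative $2/(1+2t)^{3}>0$ and is strictly convex. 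The proposed "routine Hessian calculation" therefore cannot go through, and the side remark that $\left(\frac{1}{u}+\frac{1}{v}+\frac{1}{uv}\right)^{-1}$ is a known concave function is the same error in disguise. What \emph{is} jointly concave is the harmonic-mean form $\frac{uv}{u+v}$, i.e.\ you must additionally drop the constant $1$ via a high-SNR approximation. That extra step is exactly what the paper performs — but in the proof of the \emph{non-coherent} special case (Lemma~\ref{Lemma_noncoh1_zeta0}, Appendix~\ref{Lemma01_zeta0}); your reinterpretation of $\hat{\zeta}=0$ by restoring $\sigma_{R_k}^{2}$ is that lemma's route, and without the high-SNR step it is mathematically broken.

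The paper's proof of the coherent Lemma~\ref{Lemma_coh2_zeta0} (Appendix~\ref{Lemma02_zeta0}) takes a different and much shorter path: it keeps the \textbf{Problem 4} objective as stated (noise already discarded) and sets $\hat{\zeta}=0$ there, so the first-hop term $p_S^{2}\left|h_{SR_k}\right|^{2}/(\hat{\zeta}p_{R_k}^{2})$ dominates and the objective collapses to $\left|h_{R_kD}\right|^{2}p_{R_k}^{2}/\sigma_D^{2}$ — a function of $p_{R_k}$ alone, monotonically increasing for $p_{R_k}>0$ — after which the global optimum over the convex feasible set is immediate. Your second concern, convexity of the coherent interference constraint, is not actually the delicate part here: the paper establishes convexity of $\mathcal{\bar{I}}^{\sf coh}_k$ in $p_S$ and $p_{R_k}$ once and for all in Appendix~\ref{Lemma5} and simply cites it, for general $\hat{\zeta}$, not just $\hat{\zeta}=0$. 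So as written your proposal has a genuine gap (the false joint-concavity claim), and repairing it requires either importing the high-SNR approximation from the non-coherent lemma or switching to the paper's degenerate-objective argument.
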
 

\begin{proof} The proof can be found in Appendix~\ref{Lemma02_zeta0}. \end{proof}

According to Lemma ~\ref{Lemma_coh2_zeta0}, we can solve \textbf{Problem 4} in this special case by using standard approaches \cite{Boyd04}.

\vspace{10pt}
\section{Numerical Results}
\label{Results}

In the numerical evaluation, we set the key parameters for the FDCRN as follows.
We assume that each link is a Rayleigh fading channel with variance one (i.e., $\sigma_{SR_k}$ = $\sigma_{R_kD}$ = 1), except the negligible $\sigma_{SD}$ = 0.1.
The noise power at every node is also set to be one.
The channel gains for the links of the SU relay-PU receiver and SU source-PU receiver are assumed to be Rayleigh-distributed with  variances $\left\{\sigma_{SP}, \sigma_{R_kP}\right\} \in \left[0.8, 1\right]$.
We also assume that the impact of imperfect channel estimation is included in only one parameter, i.e., $\zeta$.
Due to the space constraint, we only report some essential results, more detailed results can be found in the online technical report \cite{TanTechreport}.

\begin{table*} 
\centering
\caption{Achievable rate vs $\mathcal{\bar I}_P$ ($P_{\sf max} = 20 dB$, $\zeta = 0.001$)}
\label{table1}
\begin{tabular}{|c|c|c|c|c|c|c|c|}
\hline 
\multicolumn{2}{|c|}{$\mathcal{\bar I}_P$ (dB)}     
       & 0   & 2   & 4   & 6   & 8 & 10\tabularnewline
\hline 
\hline 
$\zeta = 0.001$,      & Optimal & 4.3646  &  5.1933  &  5.5533  &  5.6944  &  5.8162  &  5.9155 \tabularnewline
\cline{2-8} 
Coherent  & Greedy  & 4.3513  &  5.1807  &  5.5496  &  5.6811  &  5.8131  &  5.8826  \tabularnewline
\cline{2-8} 
scenario & $\Delta \mathcal{C} (\%)$ & 0.3047   & 0.2426   & 0.0666   & 0.2336   & 0.0533  &  0.5562  \tabularnewline
\hline 
\hline 
$\zeta = 0.001$,    & Optimal & 1.2390  &  1.6946  &  2.2118  &  2.7753  &  3.3718  &  3.9902  \tabularnewline
\cline{2-8} 
Non-coherent  & Greedy  & 1.2309  &  1.6856  &  2.2018  &  2.7650  &  3.3610  &  3.9791  \tabularnewline
\cline{2-8} 
scenario & $\Delta \mathcal{C} (\%)$ & 0.6538  &  0.5311  &  0.4521  &  0.3711  &  0.3203  &  0.2782  \tabularnewline
\hline
\end{tabular}
\end{table*}

We first demonstrate the efficacy of the proposed algorithms by comparing their achievable rate performances with those obtained by the optimal brute-force search algorithms.
Numerical results are presented for both coherent and non-coherent scenarios where we set $P_S^{\sf max} = P_{R_k}^{\sf max} = P_{\sf max}$ for simplicity.
In Table~\ref{table1}, we consider the scenario with $\zeta$ = 0.001, 8 SU relays and $P_{\sf max} = 20$ dB.
We compare the achievable rate of the proposed and optimal algorithms for $\mathcal{\bar I}_P = \left\{0, 2, 4, 6, 8, 10\right\}$ dB.
These results confirm that our proposed algorithms achieve rate very close to that attained by the optimal solutions for both coherent and non-coherent scenarios (i.e., the errors are lower than 1\%).

We then consider a FDCRN $8$ SU relays with $\zeta$ = 0, 0.001, 0.01, and 0.4, which represent ideal, high, medium and low Quality of Self-Interference Cancellation (QSIC), respectively. The tradeoffs between the achievable rate of the FDCRN and the interference constraint are shown in Figs.~\ref{1_Rate_vs_I_bar_P_P_max_10152025_zeta_0}, \ref{1_Rate_vs_I_bar_P_P_max_10152025_zeta_0001}, \ref{1_Rate_vs_I_bar_P_P_max_10152025_zeta_001} and \ref{2_Rate_vs_I_bar_P_P_max_10152025_zeta_04} under different values of $\zeta.$ In these numerical results,  we chose $P_S^{\sf max} = P_{R_k}^{\sf max} = P_{\sf max}$ for simplicity.

We have the following observations from these numerical results. 
\begin{itemize}

\item The achievable rates of the coherent mechanism are always significantly higher than those of the  non-coherent mechanism.
This is because the phase is carefully regulated to reduce the interference at the PU receiver imposed by the SU transmissions, which allows higher transmit power both at the SU source and the SU relay.

\item As expected, the achievable rate decreases as the QSIC increases due to the increase of self-interference at the FD relay.

\item In all cases, if we increase $\mathcal{\bar I}_P$, the performance in terms of data rate increases. 
Because the feasible range of $\left\{P_S, P_{R_k}\right\}$ is much larger. 
However there is the threshold for $\mathcal{\bar I}_P$ where we cannot obtain the higher data rate when we increase $\mathcal{\bar I}_P$ (see Fig.~\ref{2_Rate_vs_I_bar_P_P_max_10152025_zeta_04}).
Because to obtain the higher performance of data rate with higher $\mathcal{\bar I}_P$, we shall increase $P_S$ and $P_{R_k}$ .
However the self-interference is also higher due to the increase of $P_{R_k}$ which results to decrease the data rate performance.

\end{itemize}

\begin{figure}[!t]
\centering
\includegraphics[width=70mm]{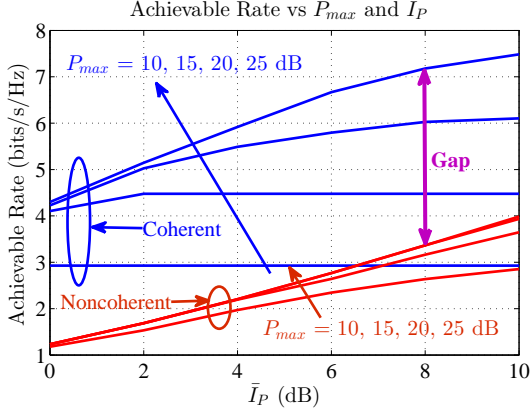} 
\caption{Achievable rate versus the interference constraint $\mathcal{\bar I}_P$
 for $K = 8$, $\zeta = 0$, $P_{\sf max} = \left\{10, 15, 20, 25\right\}$ dB, and both coherent and non-coherent scenarios.}
\label{1_Rate_vs_I_bar_P_P_max_10152025_zeta_0}
\end{figure}

\begin{figure}[!t]
\centering
\includegraphics[width=70mm]{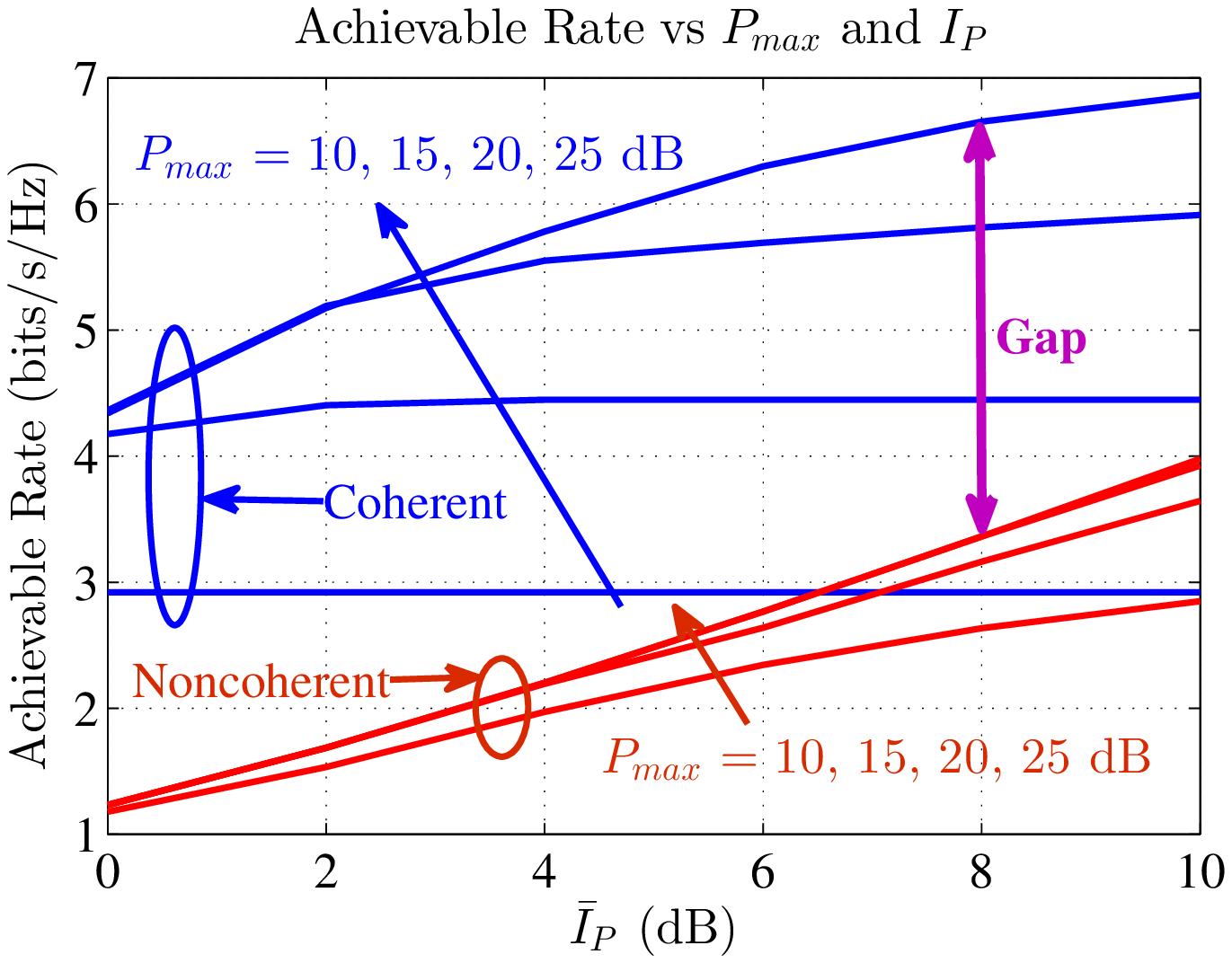}
\caption{Achievable rate versus the interference constraint $\mathcal{\bar I}_P$
 for $K = 8$, $\zeta = 0.001$, $P_{\sf max} = \left\{10, 15, 20, 25\right\}$ dB, and both coherent and non-coherent scenarios.}
\label{1_Rate_vs_I_bar_P_P_max_10152025_zeta_0001}
\end{figure}

\begin{figure}[!t]
\centering
\includegraphics[width=70mm]{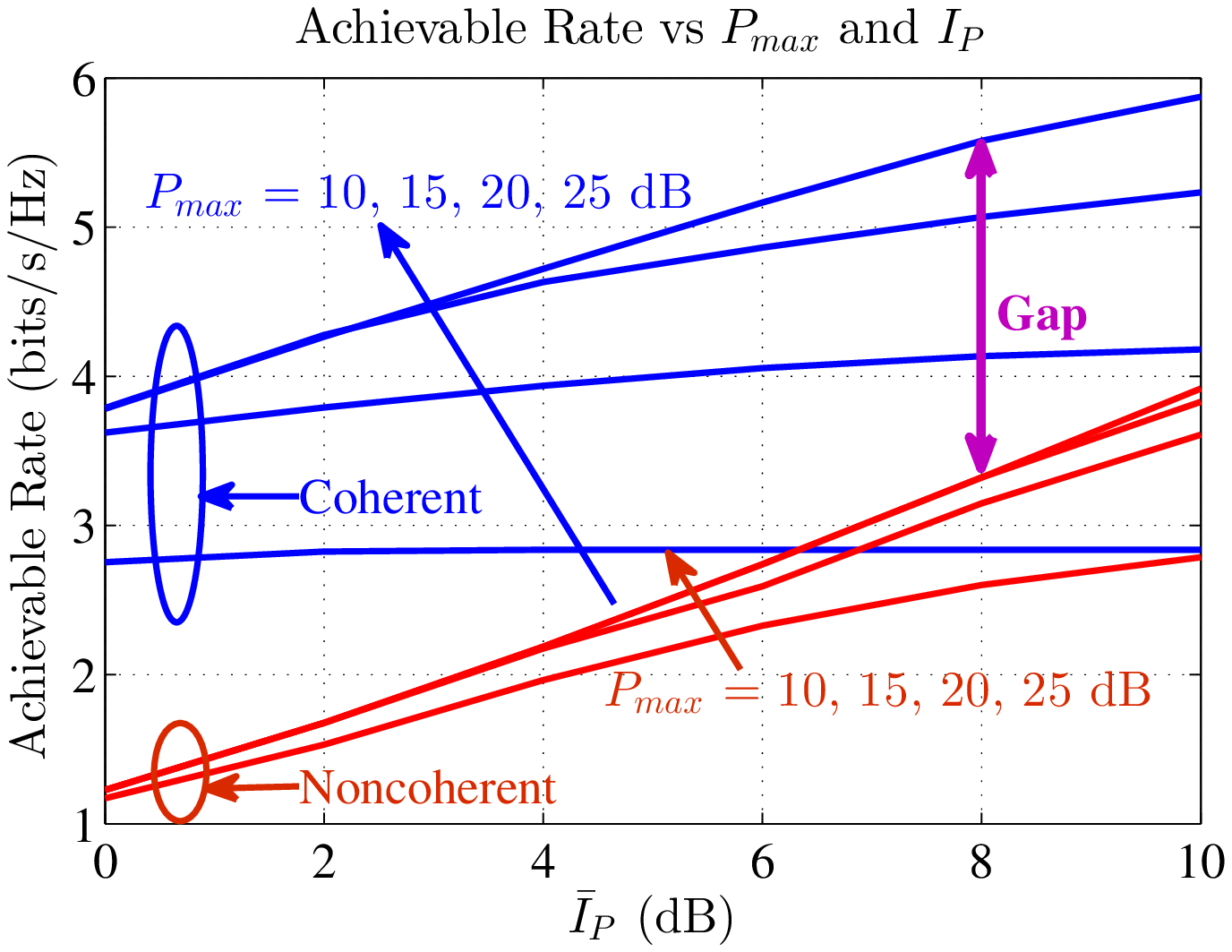}
\caption{Achievable rate versus the interference constraint $\mathcal{\bar I}_P$
 for $K = 8$, $\zeta = 0.01$, $P_{\sf max} = \left\{10, 15, 20, 25\right\}$ dB, and both coherent and non-coherent scenarios.}
\label{1_Rate_vs_I_bar_P_P_max_10152025_zeta_001}
\end{figure}

\begin{figure}[!t]
\centering
\includegraphics[width=70mm]{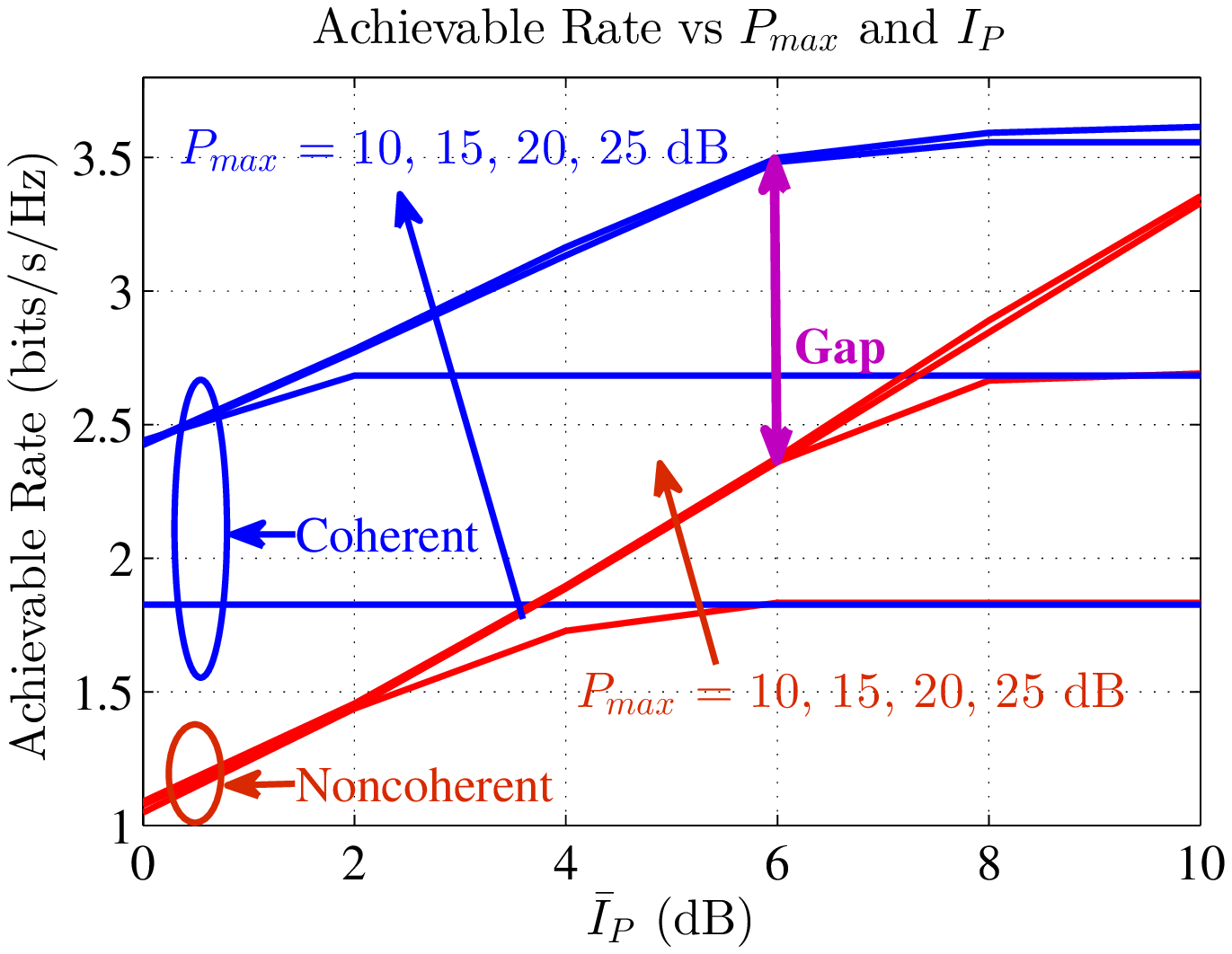}
\caption{Achievable rate versus the interference constraint $\mathcal{\bar I}_P$
 for $K = 8$, $\zeta = 0.4$, $P_{\sf max} = \left\{10, 15, 20, 25\right\}$ dB, and both coherent and non-coherent scenarios.}
\label{2_Rate_vs_I_bar_P_P_max_10152025_zeta_04}
\end{figure}

\begin{figure}[!t]
\centering
\includegraphics[width=70mm]{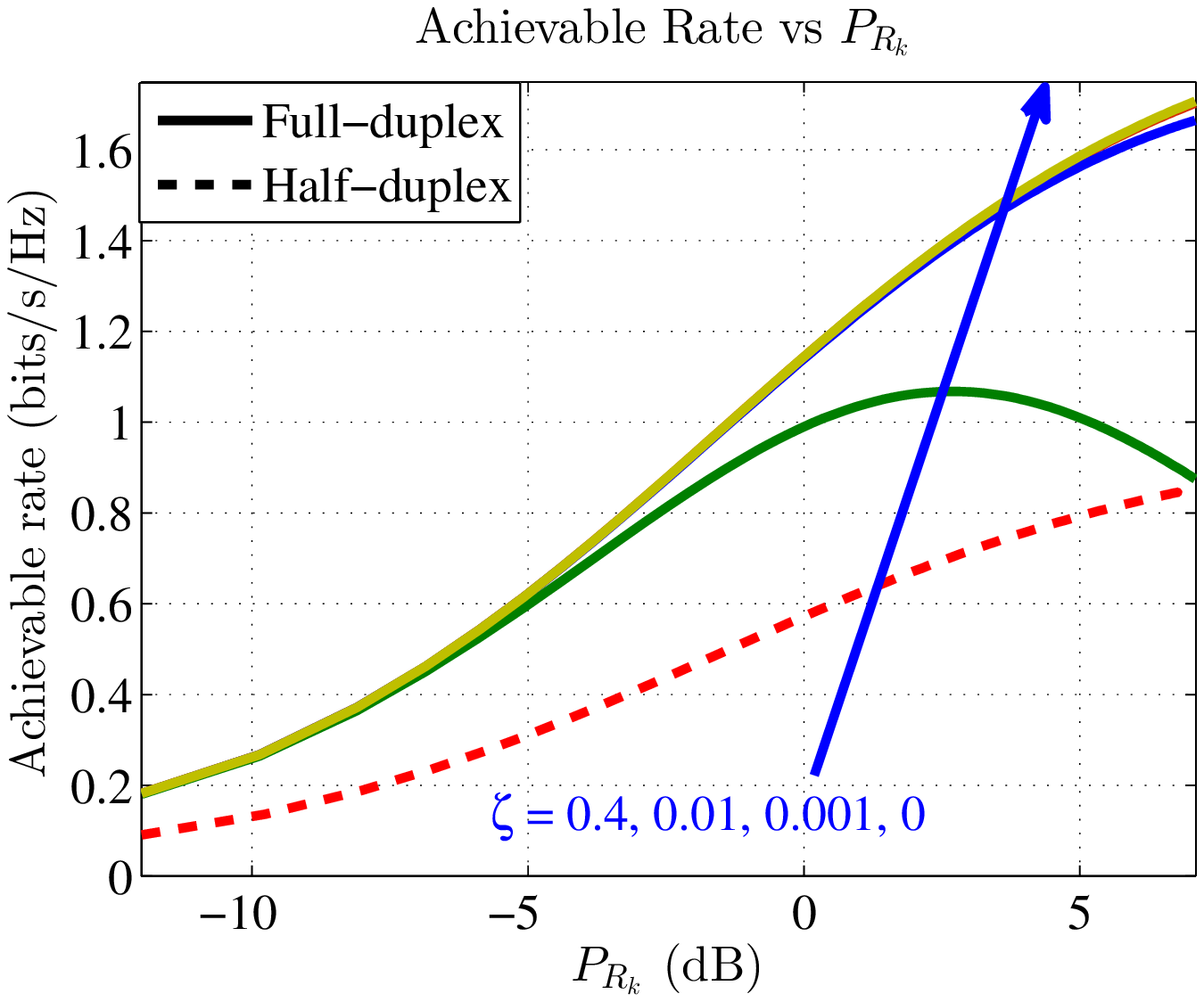}
\caption{Achievable rate versus the transmitted powers of SU relay $P_{R_k}$
for fixed $P_S = 5$ dB, $K = 10$, $\mathcal{\bar I}_P = 8$ dB, $P_{\sf max} = 25$ dB, and the non-coherent scenario.}
\label{Rate_vs_P_Rk_Non_P_S_5dB}
\end{figure}

\begin{figure}[!t]
\centering
\includegraphics[width=70mm]{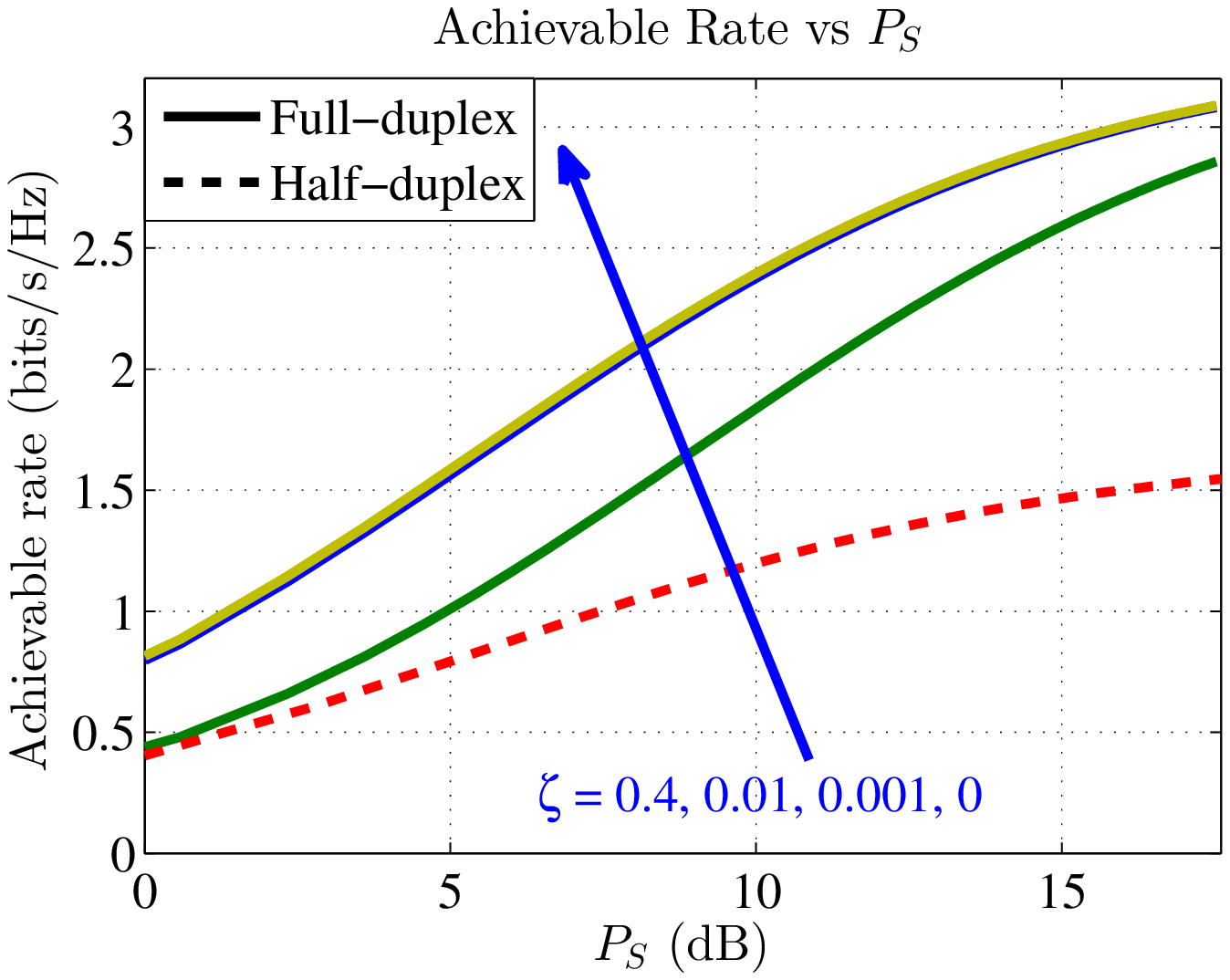}
\caption{Achievable rate versus the transmitted powers of SU source $P_S$
for fixed $P_{R_k} = 5$ dB, $K = 10$, $\mathcal{\bar I}_P = 8$ dB, $P_{\sf max} = 25$ dB, and the non-coherent scenario.}
\label{Rate_vs_P_S_Non_P_Rk_5dB}
\end{figure}

\begin{figure}[!t]
\centering
\includegraphics[width=70mm]{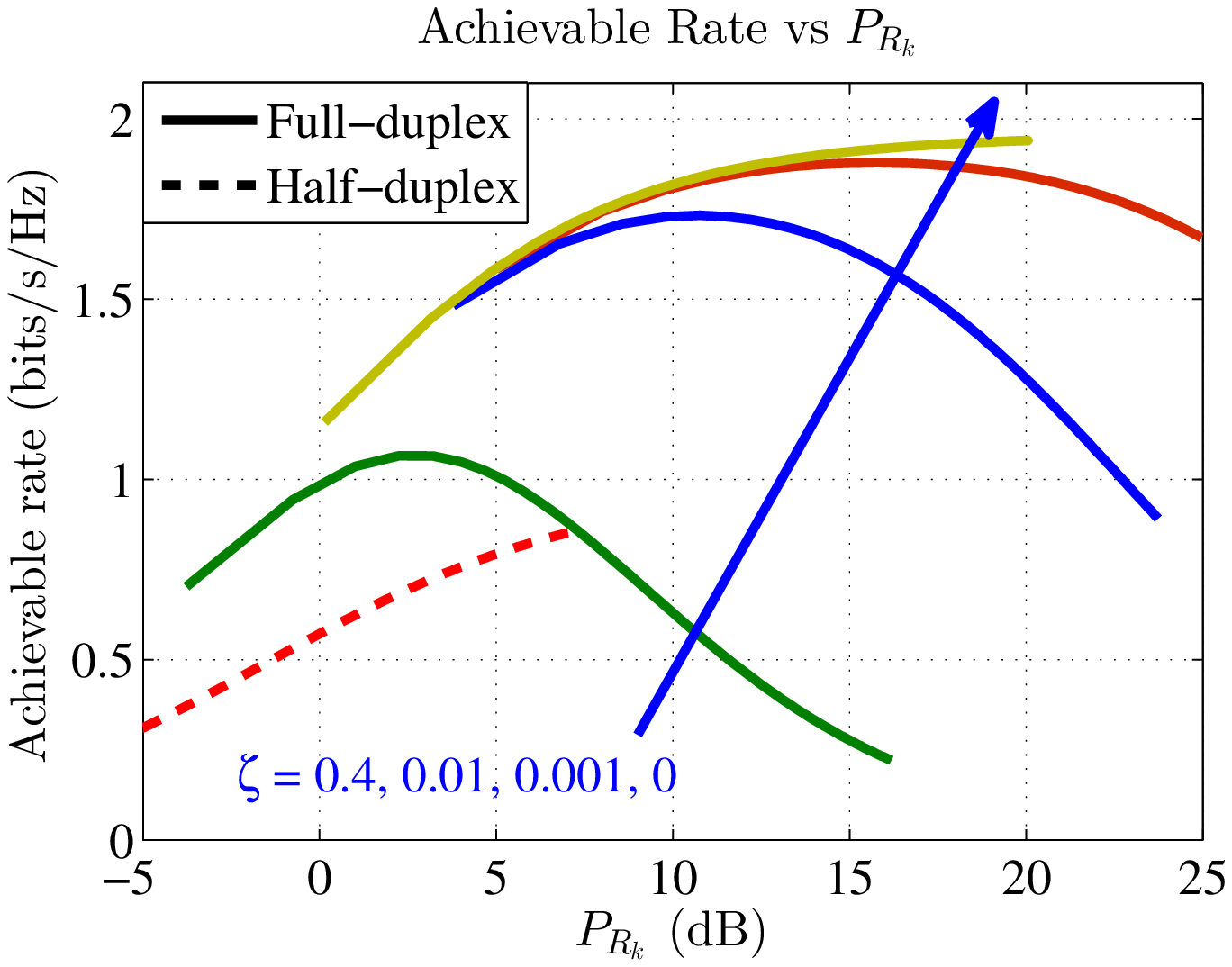}
\caption{Achievable rate versus the transmitted powers of SU relay $P_{R_k}$
for fixed $P_S = 5$ dB, $K = 10$, $\mathcal{\bar I}_P =8$ dB, $P_{\sf max} = 25$ dB, and the coherent scenario.}
\label{Rate_vs_P_Rk_Coh_P_S_5dB}
\end{figure}

\begin{figure}[!t]
\centering
\includegraphics[width=70mm]{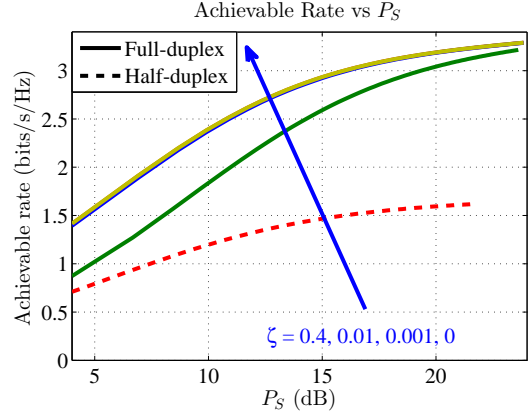}
\caption{Achievable rate versus the transmitted powers of SU source $P_S$
for fixed $P_{R_k} = 5$ dB, $K = 10$, $\mathcal{\bar I}_P =8$ dB, $P_{\sf max} = 25$ dB, and the coherent scenario.}
\label{Rate_vs_P_S_Coh_P_Rk_5dB}
\end{figure}

We now show the achievable rates of the FDCRN under different values of $P_{R_k}$ when fixing $P_S = 5$ dB in Fig.~\ref{Rate_vs_P_Rk_Non_P_S_5dB}.
The channel gains of the links of the SU relay-PU receiver and SU source-PU receiver were assumed to be Rayleigh-distributed with  variances $\left\{\sigma_{R_kP}, \sigma_{SP}\right\} \in \left[0.8, 1\right]$.
Fig.~\ref{Rate_vs_P_Rk_Non_P_S_5dB} evaluates the non-coherent scenario, $K = 10$ SU relays and $\mathcal{\bar I}_P =8$ dB.
For each value of $\zeta$ ($\zeta = \left\{0.4, 0.01, 0.001, 0\right\}$), there exists the corresponding optimal SU relay transmit power $P_{R_k}^*$ ($P_{R_k}^* = \left\{2.6995, 7.0891, 7.0891, 7.0891\right\}$ dB) where the SUs achieves the largest rate.
Furthermore, the achieved rate significantly decreases when the $P_{R_k}$ deviates from the optimal value $P_{R_k}^*$.
It is also easily observed that all four cases (low, medium, high and ideal QSIC) have similar behaviors and achieve higher rate than the half-duplex case. For low QSIC (i.e., $\zeta = 0.4$),  the rate first increases then decreases as $P_{R_k}$ increases where the rate decrease is due to the strong self-interference.

Fig.~\ref{Rate_vs_P_S_Non_P_Rk_5dB} illustrates the achievable rates of the cognitive system against $P_S$ for a fixed $P_{R_k} = 5$ dB in the non-coherent scenario.
Here we also consider the cognitive radio network with the non-coherent scenario, $K = 10$ SU relays and the parameter setting of $\mathcal{\bar I}_P = 8$ dB.
Moreover, we compare the achievable rate of our proposed power allocation for the FDCRN where SU relays can perform simultaneously reception and transmission and the half-duplex cognitive radios which uses different time slots for transmission and reception.
The results from both Figs.~\ref{Rate_vs_P_Rk_Non_P_S_5dB} and \ref{Rate_vs_P_S_Non_P_Rk_5dB} confirm that the proposed power allocation for the FDCRN outperforms the HDCRN at the corresponding optimal value of $P_S^*$ (or $P_{R_k}^*$) required by our proposed scheme.

We consider the cognitive radio network with the coherent scenario, $K = 10$ SU relays and the parameter setting of $\mathcal{\bar I}_P = 8$ dB.
Fig.~\ref{Rate_vs_P_Rk_Coh_P_S_5dB} demonstrates the achievable rates of the proposed FDCRN under different values of $P_{R_k}$ when fixing $P_S = 5$ dB; while Fig.~\ref{Rate_vs_P_S_Coh_P_Rk_5dB} demonstrates the achievable rates of the proposed FDCRN under different values of $P_S$ when fixing $P_{R_k} = 5$ dB.
In the coherent scenario, we also have the same observations as those in the non-coherent scenario.
However, the feasible range of $P_{R_k}$ (or $P_S$) in the coherent scenario is much larger than that in the non-coherent scenario.
That is the result of coherent phase regulation which then improves the achievable rates.
Recall that more numerical results can be found in the online technical report \cite{TanTechreport}.

\section{Conclusion}
\label{conclusion} 

This paper studied power control and relay selection in FDCRNs. We formulated the rate maximization problem, analyzed the achievable rate under the interference constraint, and proposed joint power control and relay selection algorithms based on alternative optimization.
The design and analysis have taken into account the self-interference of the FD transceiver, and included the both coherent and non-coherent scenarios. Numerical results have been presented to demonstrate the impacts
of the levels of self-interference and the significant gains of the coherent mechanism. 
Moreover, we have shown that the proposed FDCRN achieves significantly higher data rate than the conventional HD schemes, which confirms that the FDCRN can efficiently exploit the FD communication capability. 


\section*{Acknowledgment}
This work was supported in part by the NSF under Grant CNS-1262329, ECCS-1547294, ECCS-1609202, and the U.S. Office of Naval Research (ONR) under Grant N00014-15-1-2169.

\appendices

\section{Proof of Theorem~ \ref{theorem1}}
\label{Theo1}

We start the proof by defining the following quantities.
Let us define $\tilde{B} = B e^{-j\phi} = \left|B\right| \angle \phi_{\tilde{B}}$ where $\left|\tilde{B}\right| = \left|B\right|$ and $\phi_{\tilde{B}} = \phi_B - \phi$.
Let the subscripts $R$ and $I$ (at $a^R$ and $a^I$ ) be the real and image of complex $a$.
Note that we can obtain the following equations
\beqn
\left|A\right|^2 &=& \left(A^R\right)^2 + \left(A^I\right)^2 \label{EQN_A_abs}\\
\left|\tilde{B}\right|^2 &=& \left(\tilde{B}^R\right)^2 + \left(\tilde{B}^I\right)^2 \label{EQN_B_abs}\\
A^R &=& \left|A\right| {\sf Cos} \left(\phi_A\right) \label{EQN_A_Real}\\
A^I &=& \left|A\right| {\sf Sin}\left(\phi_A\right) \label{EQN_A_Image}\\
\tilde{B}^R &=& \left|\tilde{B}\right| {\sf Cos}\left(\phi_{\tilde{B}}\right) =  \left|B\right| {\sf Cos}\left(\phi_{\tilde{B}}\right) \label{EQN_B_Real}\\
\tilde{B}^I &=& \left|\tilde{B}\right| {\sf Sin}\left(\phi_{\tilde{B}}\right) = \left|B\right| {\sf Sin}\left(\phi_{\tilde{B}}\right) \label{EQN_B_Image}
\eeqn

From (\ref{EQN_I_k_coh_ori}), we can rewrite the $\mathcal{\bar{I}}^{\sf coh}_k\left(P_S, P_{R_k}, \phi\right)$ as follows:
\beqn
\mathcal{\bar{I}}^{\sf coh}_k \left(P_S, P_{R_k}, \phi\right) &=& \left|A^R + j A^I + \tilde{B}^R + j\tilde{B}^I\right|^2 \\
&=& \left|A^R + \tilde{B}^R + j \left(A^I + \tilde{B}^I\right)\right|^2 \\
&=& \left[\!\left(\!A^R\!\right)^2 \!+\! \left(\!\tilde{B}^R\!\right)^2 \!+\! \left(\!A^I\!\right)^2 \! + \!\left(\!\tilde{B}^I\!\right)^2 \right.\\
&& \left.+ 2 \left(A^R \tilde{B}^R + A^I \tilde{B}^I\right)\right] 
\eeqn
We substitute the parameters from (\ref{EQN_A_abs})--(\ref{EQN_B_Image}) to the above result.
After using some simple manipulations, we get
\beqn
\label{EQN_I_bar_final}
\mathcal{\bar{I}}^{\sf coh}_k \left(P_S, P_{R_k}, \phi\right) \!=\! \left[\!\left|A\right|^2 \!+\! \left|B\right|^2 \!+\! 2 \left|A\right| \left|B\right|{\sf Cos} \left(\phi_A \!-\! \phi_{\tilde{B}}\right)\!\right] 
\eeqn

From (\ref{EQN_I_bar_final}), we can obtain the minimum $\mathcal{\bar{I}}^{\sf coh}_k \left(P_S, P_{R_k}, \phi\right)$ when ${\sf Cos} \left(\phi_A -\phi_{\tilde{B}}\right) = -1$.
Therefore, $\phi_A -\phi_{\tilde{B}} = \phi_A -\left(\phi_B - \phi\right)= \pi$.
Finally, we get
\beqn
\phi_{\sf opt} &=& \pi + \phi_B - \phi_A\\
\mathcal{\bar{I}}^{\sf coh}_k\left(P_S, P_{R_k}, \phi_{\sf opt}\right) &=& \left(\left|A\right| - \left|B\right|\right)^2
\eeqn
We now complete the proof of Theorem ~\ref{theorem1}.

\section{Proof of Lemma ~\ref{Lemma_noncoh1}}
\label{Lemma01}

In the following, we use contradiction to prove that \textbf{Problem 2} is not the strictly convex optimization problem.
We can easily confirm that the constraints in \textbf{Problem 2} are convex sets due to the fact that all of them are linear functions. 
To prove \textbf{Problem 2} is convex optimization problem, we must prove that the function $\mathcal{\bar C}_k(P_S, P_{R_k})$ is concave.
We rewrite $\mathcal{\bar C}_k(P_S, P_{R_k})$ as follows:
\beqn\label{EQN_C_bar}
\mathcal{\bar C}_k(P_S, P_{R_k}) = \frac{\frac{\left|h_{R_kD}\right|^2}{\sigma_D^2} \frac{\left|h_{SR_k}\right|^2}{\hat{\zeta}}}  {f\left(P_S, P_{R_k}\right)}
\eeqn
where
\beqn \label{EQN_g_FUNC}
f\left(P_S, P_{R_k}\right) = \frac{1}{P_S}+ \frac{P_{R_k} \left|h_{R_kD}\right|^2}{P_S \sigma_D^2} + \frac{\left|h_{SR_k}\right|^2}{\hat{\zeta} P_{R_k}}
\eeqn

We can see that if both the terms in (\ref{EQN_C_bar}) are concave, then $\mathcal{\bar C}_k(P_S, P_{R_k})$ is concave.
Here the first term, $\frac{P_S \left|h_{SD}\right|^2}{\sigma_D^2}$ is concave.
To prove that $\mathcal{\bar C}_k(P_S, P_{R_k})$ is concave, we can instead prove that $\overline{f}\left(P_S, P_{R_k}\right) = 1/f\left(P_S, P_{R_k}\right)$ is a concave function.
To do so, we first determine the Hessian matrix of $\overline{f}\left(P_S, P_{R_k}\right)$ which can be expressed as
\beqn
\mathcal{H}_f = \left[\begin{array}{cc} 
\mathcal{H}_{11} & \mathcal{H}_{12} \\
\mathcal{H}_{21} & \mathcal{H}_{22}
\end{array}\right]
\eeqn

Here $\left\{\mathcal{H}_{ij}\right\}$, $i, j \in \left\{1,2\right\}$ are given as
\beqn
\label{EQN_H_11_PROOF}
\begin{array}{llll} 
\mathcal{H}_{11} & = & \frac{\partial^2 \overline{f}}{\partial P_S^2} & = -\frac{f \frac{\partial^2 f}{\partial P_S^2} - 2 \left(\frac{\partial f}{\partial P_S}\right)^2}{f^3}  \label{EQN_H_11_PROOF}\\
\mathcal{H}_{12} & = & \frac{\partial^2 \overline{f}}{\partial P_S \partial P_{R_k}}  & = -\frac{f \frac{\partial^2 f}{\partial P_S \partial P_{R_k}} - 2 \frac{\partial f}{\partial P_S} \frac{\partial f}{\partial P_{R_k}}}{f^3} \label{EQN_H_12_PROOF} \\
\mathcal{H}_{21} & = & \frac{\partial^2 \overline{f}}{\partial P_{R_k} \partial P_S}  & = -\frac{f \frac{\partial^2 f}{\partial P_{R_k} \partial P_S} - 2 \frac{\partial f}{\partial P_S} \frac{\partial f}{\partial P_{R_k}}}{f^3} \label{EQN_H_21_PROOF} \\
\mathcal{H}_{22} & = & \frac{\partial^2 \overline{f}}{\partial P_{R_k}^2}  & = -\frac{f \frac{\partial^2 f}{\partial P_{R_k}^2} - 2 \left(\frac{\partial f}{\partial P_{R_k}}\right)^2}{f^3}  \label{EQN_H_22_PROOF}
\end{array}
\eeqn
where 
\beqn
\label{EQN_f_derivatives}
\begin{array}{llll} 
\frac{\partial^2 f}{\partial P_S^2} & = &\frac{2}{P_S^3} + \frac{2 P_{R_k} \left|h_{R_k D}\right|^2}{\sigma_D^2 P_S^3}\\
\frac{\partial^2 f}{\partial P_S \partial P_{R_k}} & = &- \frac{\left|h_{R_k D}\right|^2}{\sigma_D^2 P_S^2}\\
\frac{\partial^2 f}{\partial P_{R_k} \partial P_S} & = &- \frac{\left|h_{R_k D}\right|^2}{\sigma_D^2 P_S^2} \\
\frac{\partial^2 f}{\partial P_{R_k}^2} & = &\frac{2  \left|h_{S R_k}\right|^2}{\hat{\zeta} P_{R_k}^3} \\
\frac{\partial f}{\partial P_{R_k}} & = &\frac{\left|h_{R_kD}\right|^2}{P_S \sigma_D^2} - \frac{\left|h_{SR_k}\right|^2}{\hat{\zeta} P_{R_k}^2} \\
\frac{\partial f}{\partial P_S} & = &\frac{-1}{P_S^2} - \frac{\left|h_{R_kD}\right|^2 P_{R_k}}{P_S^2 \sigma_D^2} 
\end{array}
\eeqn
We can easily observe that $\mathcal{H}_f$ is the symmetric matrix, i.e., $\mathcal{H}_{12} = \mathcal{H}_{21}$.
Because $\frac{\partial^2 f}{\partial P_S \partial P_{R_k}} = \frac{\partial^2 f}{\partial P_{R_k} \partial P_S} = - \frac{\left|h_{R_k D}\right|^2}{\sigma_D^2 P_S^2}$.

According to the Sylvester's criterion \cite{Horn12}, the Hessian matrix $\mathcal{H}_f$ is negative definite iff $\mathcal{H}_{11} < 0$ and $\mathcal{H}_{11} \mathcal{H}_{22} - \mathcal{H}_{12} \mathcal{H}_{21} > 0$. 
However we can choose the values of $\left(P_S, P_{R_k}\right)$ ($P_S \in \left[0, \tilde{P}_S\right)$) such that $\mathcal{H}_{11} \mathcal{H}_{22} - \mathcal{H}_{12} \mathcal{H}_{21} < 0$.
Here the proof and the quantity of $\tilde{P}_S$ are given in Appendix~\ref{Lemma01_1}.
In this case, the Hessian matrix $\mathcal{H}_f$ is indefinite.
Hence the function $\overline{f}\left(P_S, P_{R_k}\right)$ is not the strictly convex function.
Thus \textbf{Problem 2} is not the strictly convex optimization problem.
So Lemma ~\ref{Lemma_noncoh1} is completely done.

\section{Proof of $\mathcal{H}_{11} \mathcal{H}_{22} - \mathcal{H}_{12} \mathcal{H}_{21} < 0$ when $P_S \in \left[0, \tilde{P}_S\right)$}
\label{Lemma01_1}

In this section, we prove that for any given $P_{R_k}$ we can choose the value of $P_S$ ($P_S \in \left[0, \tilde{P}_S\right)$) such that 
$\mathcal{H}_{11} \mathcal{H}_{22} - \mathcal{H}_{12} \mathcal{H}_{21} < 0$, where $\tilde{P}_S = \tilde{P}_{S2}$ is from (\ref{EQN_P_S_2_PROOF}).
We define $\mathcal{SC}_1$ as follows:
\beqn
\label{EQN_SC_1}
\mathcal{SC}_1 = \mathcal{H}_{11} \mathcal{H}_{22} - \mathcal{H}_{12} \mathcal{H}_{21} = \mathcal{H}_{11} \mathcal{H}_{22} - \left(\mathcal{H}_{12}\right)^2
\eeqn

We substitute all parameters in (\ref{EQN_f_derivatives}) to $\mathcal{H}_{11}$, $\mathcal{H}_{12}$,  and $\mathcal{H}_{22}$ at (\ref{EQN_H_11_PROOF}).
After using some manipulations, we obtain
\beqn
\mathcal{H}_{11}\!\! &\!\! = \!\!& \!\!\frac{-2 \left|h_{S R_k}\right|^2}{f^3 \hat{\zeta} P_{R_k} P_S^3 \sigma_D^2} \left(\sigma_D^2 + \left|h_{R_k D}\right|^2 P_{R_k}\right) \label{EQN_H111222_1}\\
\mathcal{H}_{22}\!\!  & \!\!= \!\!& \!\! \frac{-2}{f^3 P_S^2 \sigma_D^2} \left[- \frac{\left|h_{R_k D}\right|^4}{\sigma_D^2} \right. \nonumber\\
& \!\!+\!\! & \left.\frac{\left|h_{S R_k}\right|^2  P_S \left(\sigma_D^2 + 3\left|h_{R_k D}\right|^2 P_{R_k}\right)}{\hat{\zeta} P_{R_k}^3}   \!\right] \label{EQN_H111222_2}\\
\mathcal{H}_{12}\!\!  & \!\!= \!\!& \!\!\frac{-1}{f^3 P_S^2 \sigma_D^2} \left[ - \frac{\left|h_{R_k D}\right|^2 \left|h_{SR_k}\right|^2}{\hat{\zeta} P_{R_k}^2}\right.\nonumber\\
& \!\!+\!\! &\!\!\left.\!\left(\!\frac{\left|h_{R_k D}\right|^2}{P_S} \!-\!\frac{2\left|h_{SR_k}\right|^2\! \sigma_D^2}{\hat{\zeta} P_{R_k}^2}\!\right) \!\!\left(\!1\!+\!\frac{\left|h_{R_k D}\right|^2 \!P_{R_k}}{\sigma_D^2}\!\right)\!\right] \label{EQN_H111222_3}
\eeqn

From (\ref{EQN_H111222_1}), (\ref{EQN_H111222_2}), (\ref{EQN_H111222_3}) and (\ref{EQN_SC_1}), we can get
\beqn
\label{EQN_SC_1}
\mathcal{SC}_1 = \frac{\left|h_{R_k D}\right|^2}{P_S^6 \sigma_D^2}  \left(a P_S^2 + b P_S + c\right)
\eeqn
where 
\beqn
a &=& \frac{\left|h_{S R_k}\right|^4}{\hat{\zeta}^2 P_{R_k}^3} \left(12 + \frac{11\left|h_{R_k D}\right|^2 P_{R_k}}{\sigma_D^2} \right) >0 \\
b &=& \frac{2\left|h_{S R_k}\right|^2}{\hat{\zeta} P_{R_k}^2} \left(2 + \frac{\left|h_{R_k D}\right|^2 P_{R_k}}{\sigma_D^2} \right) \\
c &=& -\frac{\left|h_{R_k D}\right|^2}{\sigma_D^2} \left(1 + \frac{\left|h_{R_k D}\right|^2 P_{R_k}}{\sigma_D^2} \right) < 0
\eeqn

We can easily see that the quadratic function $m(P_S) = a P_S^2 + b P_S + c$ has two roots, namely $\tilde{P}_{S1}$ and $\tilde{P}_{S2}$.
Because $c<0$ and $a>0$, $\Omega = b^2 - 4ac > 0$.
These quantities can be written as
\beqn
\tilde{P}_{S1} &=& \frac{-b-\sqrt{b^2 - 4ac}}{2a}\\
\tilde{P}_{S2} &=& \frac{-b+\sqrt{b^2 - 4ac}}{2a} \label{EQN_P_S_2_PROOF}
\eeqn
Note that $\tilde{P}_{S1}<0$ and $\tilde{P}_{S2}>0$ because 
\beqn
\frac{-b-\sqrt{b^2 - 4ac}}{2a} < \frac{-b-\sqrt{b^2}}{2a} = 0\\
\frac{-b+\sqrt{b^2 - 4ac}}{2a} > \frac{-b+\sqrt{b^2}}{2a} = 0
\eeqn
Hence $\mathcal{SC}_1$ can be rewritten as
\beqn
\label{EQN_SC_1_FINAL}
\mathcal{SC}_1 = \frac{\left|h_{R_k D}\right|^2 a}{P_S^6 \sigma_D^2}  \left(P_S - \tilde{P}_{S1}\right) \left(P_S - \tilde{P}_{S2}\right)
\eeqn
We now can find that if $0< P_S < \tilde{P}_S$, where $\tilde{P}_S = \tilde{P}_{S2}$ then $\mathcal{SC}_1 < 0$.
So we complete the proof.

\section{Proof of Lemma ~\ref{lem: nonco-convex}}
\label{Lemma1}

We can easily confirm that the constraints in \textbf{Problem 2} are convex sets due to the fact that all of them are linear functions. 
To prove \textbf{Problem 2} is convex optimization problem, we must prove that the function $\mathcal{\bar C}_k(P_S, P_{R_k})$ is concave.
Note that $\mathcal{\bar C}_k(P_S, P_{R_k})$ and $f\left(P_S, P_{R_k}\right)$ are from (\ref{EQN_C_bar}) and (\ref{EQN_g_FUNC}), respectively.

We can see that if both the terms in (\ref{EQN_C_bar}) are concave, then $\mathcal{\bar C}_k(P_S, P_{R_k})$ is concave.
Here the first term, $\frac{P_S \left|h_{SD}\right|^2}{\sigma_D^2}$ is concave.
So to prove that $\mathcal{\bar C}_k(P_S, P_{R_k})$ is concave, we can instead prove that $f\left(P_S, P_{R_k}\right)$ is a convex function \cite{Boyd04}.
For given $P_S \in \left[0, P_S^{\sf max}\right]$, we take the first-order partial derivative of $f\left(P_S, P_{R_k}\right)$ with respect to $P_{R_k}$ as 
\beqn
\frac{\partial f}{\partial P_{R_k}} = \frac{\left|h_{R_kD}\right|^2}{P_S \sigma_D^2} - \frac{\left|h_{SR_k}\right|^2}{\hat{\zeta} P_{R_k}^2} 
\eeqn
Then the second-order partial derivative of $f\left(P_S, P_{R_k}\right)$ with respect to $P_{R_k}$ can be determined as
\beqn
\frac{\partial^2 f}{\partial P_{R_k}^2} = \frac{2  \left|h_{S R_k}\right|^2}{\hat{\zeta} P_{R_k}^3}
\eeqn
We can see that $\frac{\partial^2 f}{\partial P_{R_k}^2} > 0$, hence $f\left(P_S, P_{R_k}\right)$ is a convex function for $P_{R_k}$.

Similarly, for given $P_{R_k} \in \left[0, P_{R_k}^{\sf max}\right]$, we take the first derivative of $f\left(P_S, P_{R_k}\right)$ with respect to $P_S$ as 
\beqn
\frac{\partial f}{\partial P_S} = \frac{-1}{P_S^2} - \frac{\left|h_{R_kD}\right|^2 P_{R_k}}{P_S^2 \sigma_D^2} 
\eeqn
Then the second derivative of $f\left(P_S, P_{R_k}\right)$ with respect to $P_S$ can be calculated as
\beqn
\frac{\partial^2 f}{\partial P_S^2} = \frac{2}{P_S^3} + \frac{2 \left|h_{R_k D}\right|^2 P_{R_k}}{P_S^3 \sigma_D^2}
\eeqn
Since $\frac{\partial^2 f}{\partial P_S^2} >0$, we can conclude that $f\left(P_S, P_{R_k}\right)$ is a convex function for $P_S$.
The proof of Lemma ~\ref{lem: nonco-convex} is complete.

\section{Proof of Lemma ~\ref{Lemma_noncoh1_zeta0}}
\label{Lemma01_zeta0}

We now consider the special case of ideal self-interference cancellation, i.e., $\hat{\zeta} = 0$.
Moreover the function $\log_2(1+x)$ is the strictly increase function of variable $x$, so we can rewrite the objective function of \textbf{Problem 1} as
\beqn
\mathcal{\tilde{C}}_k(P_S, P_{R_k}) =  \frac{\frac{P_{R_k} \left|h_{R_kD}\right|^2}{\sigma_D^2} \frac{P_S \left|h_{SR_k}\right|^2}{\sigma_{R_k}^2}}  {1+ \frac{P_{R_k} \left|h_{R_kD}\right|^2}{\sigma_D^2} + \frac{P_S \left|h_{SR_k}\right|^2}{\sigma_{R_k}^2}}
\eeqn
Here we approximate the $\mathcal{\tilde{C}}_k(P_S, P_{R_k})$ in the high SNR region which is usually used in wireless communications \cite{Tse05}.
So $\mathcal{\tilde{C}}_k(P_S, P_{R_k})$ is rewritten as
\beqn
\mathcal{\tilde{C}}_k(P_S, P_{R_k}) =  \frac{\frac{P_{R_k} \left|h_{R_kD}\right|^2}{\sigma_D^2} \frac{P_S \left|h_{SR_k}\right|^2}{\sigma_{R_k}^2}}  {\frac{P_{R_k} \left|h_{R_kD}\right|^2}{\sigma_D^2} + \frac{P_S \left|h_{SR_k}\right|^2}{\sigma_{R_k}^2}}
\eeqn

We can easily confirm that the constraints in \textbf{Problem 1} are convex sets due to the fact that all of them are linear functions. 
To prove \textbf{Problem 1} is the convex optimization problem, we must prove that the function $\mathcal{\tilde{C}}_k(P_S, P_{R_k})$ is concave. 
We rewrite $\mathcal{\tilde{C}}_k(P_S, P_{R_k})$ as follows:
\beqn\label{EQN_C_bar_zeta0}
\mathcal{\tilde{C}}_k(P_S, P_{R_k}) = \frac{\frac{\left|h_{R_kD}\right|^2}{\sigma_D^2} \frac{\left|h_{SR_k}\right|^2}{\sigma_{R_k}^2}}  {\tilde{f}\left(P_S, P_{R_k}\right)}
\eeqn
where
\beqn \label{EQN_g_FUNC_zeta0}
\tilde{f}\left(P_S, P_{R_k}\right) = \frac{\left|h_{R_kD}\right|^2}{P_S \sigma_D^2} + \frac{\left|h_{SR_k}\right|^2}{\sigma_{R_k}^2 P_{R_k}}
\eeqn

To prove that $\mathcal{\tilde{C}}_k(P_S, P_{R_k})$ is concave, we can instead prove that $\overline{f}\left(P_S, P_{R_k}\right) = 1/\tilde{f}\left(P_S, P_{R_k}\right)$ is a concave function.
To do so, we first determine the Hessian matrix of $\overline{f}\left(P_S, P_{R_k}\right)$ which can be expressed as
\beqn
\mathcal{H}_{\tilde{f}} = \left[\begin{array}{cc} 
\mathcal{\tilde{H}}_{11} & \mathcal{\tilde{H}}_{12} \\
\mathcal{\tilde{H}}_{21} & \mathcal{\tilde{H}}_{22}
\end{array}\right]
\eeqn

Here $\left\{\mathcal{H}_{ij}\right\}$, $i, j \in \left\{1,2\right\}$ are given as
\beqn
\label{EQN_H_11_PROOF_zeta0}
\begin{array}{llll} 
\mathcal{\tilde{H}}_{11} & = & \frac{\partial^2 \overline{f}}{\partial P_S^2} & = -\frac{\tilde{f} \frac{\partial^2 \tilde{f}}{\partial P_S^2} - 2 \left(\frac{\partial \tilde{f}}{\partial P_S}\right)^2}{\tilde{f}^3}  \label{EQN_H_11_PROOF_zeta0}\\
\mathcal{\tilde{H}}_{12} & = & \frac{\partial^2 \overline{f}}{\partial P_S \partial P_{R_k}}  & = -\frac{\tilde{f} \frac{\partial^2 \tilde{f}}{\partial P_S \partial P_{R_k}} - 2 \frac{\partial \tilde{f}}{\partial P_S} \frac{\partial \tilde{f}}{\partial P_{R_k}}}{\tilde{f}^3} \label{EQN_H_12_PROOF_zeta0} \\
\mathcal{\tilde{H}}_{21} & = & \frac{\partial^2 \overline{f}}{\partial P_{R_k} \partial P_S}  & = -\frac{\tilde{f} \frac{\partial^2 \tilde{f}}{\partial P_{R_k} \partial P_S} - 2 \frac{\partial \tilde{f}}{\partial P_S} \frac{\partial \tilde{f}}{\partial P_{R_k}}}{\tilde{f}^3} \label{EQN_H_21_PROOF_zeta0} \\
\mathcal{\tilde{H}}_{22} & = & \frac{\partial^2 \overline{\tilde{f}}}{\partial P_{R_k}^2}  & = -\frac{\tilde{f} \frac{\partial^2 \tilde{f}}{\partial P_{R_k}^2} - 2 \left(\frac{\partial \tilde{f}}{\partial P_{R_k}}\right)^2}{f^3}  \label{EQN_H_22_PROOF_zeta0}
\end{array}
\eeqn
where 
\beqn
\label{EQN_f_derivatives_zeta0}
\begin{array}{llll} 
\frac{\partial^2 \tilde{f}}{\partial P_S^2} & = & \frac{2\left|h_{R_k D}\right|^2}{\sigma_D^2 P_S^3} \\
\frac{\partial^2 \tilde{f}}{\partial P_S \partial P_{R_k}} & = & 0 \\
\frac{\partial^2 \tilde{f}}{\partial P_{R_k} \partial P_S} & = & 0 \\
\frac{\partial^2 \tilde{f}}{\partial P_{R_k}^2} & = & \frac{2  \left|h_{S R_k}\right|^2}{\sigma_{R_k}^2 P_{R_k}^3} \\
\frac{\partial \tilde{f}}{\partial P_{R_k}} & = & - \frac{\left|h_{SR_k}\right|^2}{\sigma_{R_k}^2 P_{R_k}^2} \\
\frac{\partial \tilde{f}}{\partial P_S} & = & - \frac{\left|h_{R_kD}\right|^2}{\sigma_D^2 P_S^2} 
\end{array}
\eeqn

Substitute (\ref{EQN_f_derivatives_zeta0}) to (\ref{EQN_H_11_PROOF_zeta0}), we obtain
\beqn
\begin{array}{llll} 
\mathcal{\tilde{H}}_{11} & = & \frac{-1}{\tilde{f}^3} \frac{2\left|h_{R_k D}\right|^2 \left|h_{SR_k}\right|^2}{\sigma_D^2 \sigma_{R_k}^2 P_S^3 P_{R_k}} < 0 \\
\mathcal{\tilde{H}}_{12} & = & \frac{2}{\tilde{f}^3} \frac{2\left|h_{R_k D}\right|^2 \left|h_{SR_k}\right|^2}{\sigma_D^2 \sigma_{R_k}^2 P_S^2 P_{R_k}^2} = \mathcal{\tilde{H}}_{21}  \\
\mathcal{\tilde{H}}_{22} & = & \frac{-1}{\tilde{f}^3} \frac{2\left|h_{R_k D}\right|^2 \left|h_{SR_k}\right|^2}{\sigma_D^2 \sigma_{R_k}^2 P_S P_{R_k}^3} 
\end{array}
\eeqn
We can see that $\mathcal{H}_{11} < 0$ and $\mathcal{H}_{11} \mathcal{H}_{22} - \mathcal{H}_{12} \mathcal{H}_{21} = 0$, i.e., the Hessian matrix $\mathcal{H}_{\tilde{f}}$ is negative semi-definite according to the Sylvester's criterion \cite{Horn12}.
Hence the function $\overline{f}\left(P_S, P_{R_k}\right)$ is the convex function.
Thus \textbf{Problem 1} is the convex optimization problem for the case of $\hat{\zeta} = 0$. %
So the proof of Lemma ~\ref{Lemma_noncoh1_zeta0} is completely done.

\section{Proof of Lemma ~\ref{Lemma_coh1}}
\label{Lemma02}

We now prove that \textbf{Problem 4} is not the strictly convex optimization problem by using contradiction.
To prove \textbf{Problem 4} is convex optimization problem, we must prove that the function $\mathcal{\breve{C}}^{\sf coh}_k(p_S, p_{R_k})$ is concave.
We rewrite $\mathcal{\breve{C}}^{\sf coh}_k(p_S, p_{R_k})$ as follows:
\beqn\label{EQN_C_breve}
\mathcal{\breve{C}}^{\sf coh}_k(p_S, p_{R_k}) = \frac{\frac{\left|h_{R_kD}\right|^2}{\sigma_D^2} \frac{\left|h_{SR_k}\right|^2}{\hat{\zeta}}}  {g\left(p_S, p_{R_k}\right)}
\eeqn
where
\beqn
\label{EQN_g_FUNC}
g\left(p_S, p_{R_k}\right) = \frac{1}{p_S^2}+ \frac{p_{R_k}^2 \left|h_{R_kD}\right|^2}{p_S^2 \sigma_D^2} + \frac{\left|h_{SR_k}\right|^2}{\hat{\zeta} p_{R_k}^2}
\eeqn

To prove that $\mathcal{\breve{C}}^{\sf coh}_k(p_S, p_{R_k})$ is concave, we can instead prove that $\overline{g}\left(p_S, p_{R_k}\right) = 1/g\left(p_S, p_{R_k}\right)$ is a concave function.
To do so, we first determine the Hessian matrix of $\overline{g}\left(P_S, P_{R_k}\right)$ which can be expressed as
\beqn
\mathcal{G}_g = \left[\begin{array}{cc} 
\mathcal{G}_{11} & \mathcal{G}_{12} \\
\mathcal{G}_{21} & \mathcal{G}_{22}
\end{array}\right]
\eeqn
Here $\left\{\mathcal{G}_{ij}\right\}$, $i, j \in \left\{1,2\right\}$ are expressed as follows:
\beqn
\label{EQN_G_11_PROOF}
\begin{array}{llll} 
\mathcal{G}_{11} & = & \frac{\partial^2 \overline{g}}{\partial p_S^2} & = -\frac{g \frac{\partial^2 g}{\partial p_S^2} - 2 \left(\frac{\partial g}{\partial p_S}\right)^2}{g^3}  \label{EQN_G_11_PROOF}\\
\mathcal{G}_{12} & = & \frac{\partial^2 \overline{g}}{\partial p_S \partial p_{R_k}} & = -\frac{g \frac{\partial^2 g}{\partial p_S \partial p_{R_k}} - 2 \frac{\partial g}{\partial p_S} \frac{\partial g}{\partial p_{R_k}}}{g^3} \label{EQN_G_12_PROOF} \\
\mathcal{G}_{21} & = & \frac{\partial^2 \overline{g}}{\partial p_{R_k} \partial p_S} & = -\frac{g \frac{\partial^2 g}{\partial p_{R_k} \partial p_S} - 2 \frac{\partial g}{\partial p_S} \frac{\partial g}{\partial p_{R_k}}}{g^3} \label{EQN_G_21_PROOF} \\
\mathcal{G}_{22} & = & \frac{\partial^2 \overline{g}}{\partial p_{R_k}^2} & = -\frac{g \frac{\partial^2 g}{\partial p_{R_k}^2} - 2 \left(\frac{\partial g}{\partial p_{R_k}}\right)^2}{g^3}  \label{EQN_G_22_PROOF}
\end{array}
\eeqn
where 
\beqn
\label{EQN_g_derivatives}
\begin{array}{llll} 
\frac{\partial^2 g}{\partial p_S^2} & = &\frac{6}{p_S^4} + \frac{6 \left|h_{R_k D}\right|^2 p_{R_k}^2} {\sigma_D^2 p_S^4}\\
\frac{\partial^2 g}{\partial p_S \partial p_{R_k}} & = &- \frac{4 p_{R_k} \left|h_{R_k D}\right|^2}{\sigma_D^2 p_S^3}\\
\frac{\partial^2 g}{\partial p_{R_k} \partial p_S} & = &- \frac{4 p_{R_k} \left|h_{R_k D}\right|^2}{\sigma_D^2 p_S^3} \\
\frac{\partial^2 g}{\partial p_{R_k}^2} & = &\frac{2  \left|h_{R_kD}\right|^2}{p^2_S \sigma_D^2} + \frac{6 \left|h_{SR_k}\right|^2}{\hat{\zeta} p_{R_k}^4}\\
\frac{\partial g}{\partial p_{R_k}} & = &\frac{2h_{R_kD}^2 p_{R_k}}{p_S^2 \sigma_D^2} - \frac{2 \left|h_{SR_k}\right|^2}{\hat{\zeta} p_{R_k}^3} \\
\frac{\partial g}{\partial p_S} & = &\frac{-2}{p_S^3} - \frac{2 \left|h_{R_kD}\right|^2 p_{R_k}^2}{p_S^3 \sigma_D^2} 
\end{array}
\eeqn
We can also observe that $\mathcal{G}_g$ is the symmetric matrix, i.e., $\mathcal{G}_{12} = \mathcal{G}_{21}$.
Because $\frac{\partial^2 g}{\partial p_S \partial p_{R_k}} = \frac{\partial^2 g}{\partial p_{R_k} \partial p_S} = - \frac{4 p_{R_k} \left|h_{R_k D}\right|^2}{\sigma_D^2 p_S^3}$.

According to the Sylvester's criterion \cite{Horn12}, the Hessian matrix $\mathcal{G}_g$ is negative definite iff $\mathcal{G}_{11} < 0$ and $\mathcal{G}_{11} \mathcal{G}_{22} - \mathcal{G}_{12} \mathcal{G}_{21} > 0$.
However, we can observe that $\mathcal{G}_{11} \mathcal{G}_{22} - \mathcal{G}_{12} \mathcal{G}_{21} < 0$ in the case that $p_{R_k} \in \left[0, \tilde{p}_{R_k}\right)$ and $p_S \in \left[0, \tilde{p}_S\right)$.
The proof and the values of $\left(\tilde{p}_{R_k}, \tilde{p}_S\right)$ are given in Appendix~\ref{Lemma02_1}.
Therefore the function $g\left(p_S, p_{R_k}\right)$ is not the strictly convex function.
Thus \textbf{Problem 4} is not the strictly convex optimization problem.
So we complete the proof of Lemma ~\ref{Lemma_coh1}.

\section{Proof of $\mathcal{G}_{11} \mathcal{G}_{22} - \mathcal{G}_{12} \mathcal{G}_{21} < 0$ when $p_{R_k} \in \left[0, \tilde{p}_{R_k}\right)$ and $p_S \in \left[0, \tilde{p}_S\right)$}
\label{Lemma02_1}

In this section, we prove that for any given $p_{R_k}$ ($p_{R_k} \in \left(\tilde{p}_{R_k}, \sqrt{P_{R_k}^{\sf max}}\right]$) we can choose the value of $p_S$ ($p_S \in \left[0, \tilde{p}_S\right)$) such that 
$\mathcal{G}_{11} \mathcal{G}_{22} - \mathcal{G}_{12} \mathcal{G}_{21} < 0$, where $\left(\tilde{p}_{R_k}, \tilde{p}_S\right)$ is from ( \ref{EQN_p_R_k_tilde_SUB}) and (\ref{EQN_p_S_SUB_lemma})).

We define $\mathcal{SC}_2$ as follows:
\beqn
\label{EQN_SC_2_Proof}
\mathcal{SC}_2 = \mathcal{G}_{11} \mathcal{G}_{22} - \mathcal{G}_{12} \mathcal{G}_{21} = \mathcal{G}_{11} \mathcal{G}_{22} - \left(\mathcal{G}_{12}\right)^2
\eeqn
Hence we must prove that $\mathcal{SC}_2<0$.
Moreover, from (\ref{EQN_SC_2_Proof}), we can instead prove that $\mathcal{G}_{11} \mathcal{G}_{22}<0$ which will be done in the following.

We substitute all parameters in (\ref{EQN_g_derivatives}) to $\mathcal{G}_{11}$ and $\mathcal{G}_{22}$ at (\ref{EQN_G_11_PROOF}).
After using some manipulations, we obtain
\beqn
\mathcal{G}_{11}\!\!  = \!\!\frac{2\left(\sigma_D^2 + \left|h_{R_k D}\right|^2 p_{R_k}^2\right)}{g^3 \hat{\zeta} p_{R_k}^2 p_S^6 \sigma_D^2}  \left(\hat{\zeta} p_{R_k}^2 - 6\left|h_{S R_k}\right|^2 p_S^2\right) \label{EQN_G11_SUB} \\
\mathcal{G}_{22}\!\!  = \frac{1}{g^3 p_S^4} \left(a_1 p_S^4 + 2 a_2 p_S^2 + a_3\right) \hspace{2.8cm} \label{EQN_G22_SUB}
\eeqn
where
\beqn
a_1 & = & \frac{2\left|h_{S R_k}\right|^2}{\hat{\zeta}^2 p_{R_k}^6} >0 \label{EQN_a1_SUB} \\
a_2 & = & \frac{-3 \left|h_{S R_k}\right|^2}{\hat{\zeta} p_{R_k}^4 \sigma_D^2} \left(\sigma_D^2 + 4 \left|h_{R_k D}\right|^2 p_{R_k}^2\right) <0 \label{EQN_a2_SUB} \\
a_3 & = & \frac{-2 \left|h_{R_k D}\right|^2}{\sigma_D^4} \left(\sigma_D^2 - 3 \left|h_{R_k D}\right|^2 p_{R_k}^2\right)  \label{EQN_a3_SUB} 
\eeqn

From (\ref{EQN_G11_SUB}), if we choose $p_S^2 < \tilde{p}_{S,1}^2$ , then $\mathcal{G}_{11} > 0$.
Here $\tilde{p}_{S,1}$ is written as
\beqn
\tilde{p}_{S,1} = \sqrt{\frac{\hat{\zeta}}{6}} \frac{p_{R_k}}{\left|h_{S R_k}\right|} \label{EQN_p_tilde_S_SUB1}
\eeqn
To complete the proof, we will find the ranges $\left(p_S,p_{R_k}\right)$ (where $p_S < \tilde{p}_{S,1}$) such that $\mathcal{G}_{22} < 0$.
Note that there are many such ranges $\left(p_S,p_{R_k}\right)$.
However we just give one example as follows.
Let consider the quadratic function $\eta (p_S) = a_1 p_S^4 + 2 a_2 p_S^2 + a_3$.
From (\ref{EQN_a3_SUB}), if $p_{R_k}^2 < \tilde{p}_{R_k}^2$, then $a_3 <0$.
Here $\tilde{p}_{R_k}$ is defined as
\beqn
\tilde{p}_{R_k} = \frac{\sigma_D}{\sqrt{3}\left|h_{R_k D}\right|} \label{EQN_p_R_k_tilde_SUB}
\eeqn

In this case, let us check the $\Omega$ as
\beqn
\Omega = a_2^2 - a_1 a_3 
\eeqn
We can observe that $\Omega > a_2^2 > 0$ (because $a_1>0$ and $a_3<0$).
So the quadratic function $\eta (p_S)$ has two real roots as follows:
\beqn
p_{S,1} = \frac{1}{a_1}\left(- a_2 -  \sqrt{\Omega}\right) < 0 \\
\tilde{p}_{S,2} = \frac{1}{a_1}\left(- a_2 +  \sqrt{\Omega}\right) > 0 \label{EQN_p_S_2_SUB}
\eeqn 
We should note that $\tilde{p}_{S,2} > 0$ because 
\beqn
\frac{1}{a_1}\left(- a_2 +  \sqrt{\Omega}\right) > \frac{1}{a_1}\left(- a_2 +  \sqrt{a_2^2}\right) = 0
\eeqn
Hence $\eta (p_S)$ can be written as follows:
\beqn
\eta (p_S) =  a_1 \left(p_S - p_{S,1}\right) \left(p_S - \tilde{p}_{S,2}\right)
\eeqn
It is clearly seen that if $0 \leq p_S < \tilde{p}_{S,2}$, then $\eta (p_S) < 0$ and hence $\mathcal{G}_{22} < 0$.

In summary, let us define $\tilde{p}_S$ as
\beqn
\tilde{p}_S = \min \left\{\tilde{p}_{S,1},\tilde{p}_{S,2}\right\} \label{EQN_p_S_SUB_lemma}
\eeqn
where $\tilde{p}_{S,1}$ and $\tilde{p}_{S,2}$ are from (\ref{EQN_p_tilde_S_SUB1}) and (\ref{EQN_p_S_2_SUB}), respectively.
We can conclude that if $0 \leq p_S < \tilde{p}_S$ and $p_{R_k} < \tilde{p}_{R_k}$ ($\tilde{p}_{R_k}$ is from (\ref{EQN_p_R_k_tilde_SUB})) then $\mathcal{G}_{11} > 0$ and $\mathcal{G}_{22} < 0$; and hence $\mathcal{G}_{11} \mathcal{G}_{22} < 0$, i.e., $\mathcal{SC}_2 < 0$.
So the proof is completely done.

\section{Proof of Lemma ~\ref{lem: co-convex}}
\label{Lemma2}

We can easily confirm that the constraints in \textbf{Problem 4} are convex sets.
Because the constraint $\mathcal{\bar{I}}^{\sf coh}_k\left(p_S, p_{R_k}\right) \leq \mathcal{\bar I}_P$ is the convex set for variable $p_S$ and $p_{R_k}$ (see the proof in Appendix~\ref{Lemma5}); and the all of remaining are linear functions. 
To prove \textbf{Problem 4} is the convex optimization problem, we must prove that the function $\mathcal{\breve{C}}^{\sf coh}_k(p_S, p_{R_k})$ is concave.

To prove that $\mathcal{\breve{C}}^{\sf coh}_k(p_S, p_{R_k})$ is concave, we can instead prove that $g\left(p_S, p_{R_k}\right)$ is a convex function \cite{Boyd04}.
Here $\mathcal{\breve{C}}^{\sf coh}_k(p_S, p_{R_k})$ and $g\left(p_S, p_{R_k}\right)$ are given from (\ref{EQN_C_breve}) and (\ref{EQN_g_FUNC}), respectively.
For given $p_S \in \left[0, \sqrt{P_S^{\sf max}}\right]$, we take the first-order partial derivative of $g\left(p_S, p_{R_k}\right)$ with respect to $p_{R_k}$ as 
\beqn
\frac{\partial g}{\partial p_{R_k}} = \frac{2\left|h_{R_kD}\right|^2 p_{R_k}}{p_S^2 \sigma_D^2} - \frac{2 \left|h_{SR_k}\right|^2}{\hat{\zeta} p_{R_k}^3} 
\eeqn
Then the second-order partial derivative of $g\left(p_S, p_{R_k}\right)$ with respect to $p_{R_k}$ can be determined as
\beqn
\frac{\partial^2 g}{\partial p_{R_k}^2} = \frac{2 \left|h_{R_kD}\right|^2}{p_S^2 \sigma_D^2} + \frac{6 \left|h_{SR_k}\right|^2}{\hat{\zeta} p_{R_k}^4} 
\eeqn
We can see that $\frac{\partial^2 g}{\partial p_{R_k}^2} > 0$, hence $g\left(p_S, p_{R_k}\right)$ is a convex function for $p_{R_k}$.

Similarly, for given $p_{R_k} \in \left[0, \sqrt{P_{R_k}^{\sf max}}\right]$, we take the first derivative of $g\left(p_S, p_{R_k}\right)$ with respect to $p_S$ as 
\beqn
\frac{\partial g}{\partial p_S} = \frac{-2}{p_S^3} - \frac{2 \left|h_{R_kD}\right|^2 p_{R_k}^2}{p_S^3 \sigma_D^2} 
\eeqn
Then the second derivative of $g\left(p_S, p_{R_k}\right)$ with respect to $p_S$ can be calculated as
\beqn
\frac{\partial^2 g}{\partial p_S^2} = \frac{6}{p_S^4} + \frac{6 \left|h_{R_kD}\right|^2 p_{R_k}^2}{p_S^4 \sigma_D^2} 
\eeqn
Since $\frac{\partial^2 g}{\partial p_S^2} >0$, we can conclude that $g\left(p_S, p_{R_k}\right)$ is a convex function for $p_S$.
So we complete the proof of Lemma ~\ref{lem: co-convex}.

\section{Proof that $\mathcal{\bar{I}}^{\sf coh}_k$ is a convex function for $p_S$ and $p_{R_k}$}
\label{Lemma5}

In this section, we will prove that $\mathcal{\bar{I}}^{\sf coh}_k$ is a convex function for $p_S$ and $p_{R_k}$.
\beqn
\mathcal{\bar{I}}^{\sf coh}_k\left(p_S, p_{R_k}, \phi_{\sf opt}\right) = \left|A+B e^{-j\phi}\right|^2
\eeqn
where 
\beqn
A &=& h_{SP} p_S + h_{R_kP} \sqrt{\zeta} p_{R_k} = \left|A\right| \angle{\phi_A} \\
B &=& \left[h_{SR_k} p_S + h_{R_kR_k} \sqrt{\zeta} p_{R_k} + \frac{\sigma_{R_k}}{\sqrt{2}} (1+j)\right] \nonumber\\
& &\times G_k h_{R_kP} \sqrt{P_{R_k}} = \left|B\right| \angle{\phi_B}
\eeqn

Let us define $D$ as follows:
\beqn
D = h_{SR_k} p_S + h_{R_kR_k} \sqrt{\zeta} p_{R_k} + \frac{\sigma_{R_k}}{\sqrt{2}} (1+j)
\eeqn
We calculate $\left|D\right|^2 = D D^*$.
After some manipulations, we obtain 
\beqn
\left|D\right|^2 = G_k^{-2} + L 
\eeqn
where 
\beqn
L/2 &=& p_S h_{SR_k}^R \sqrt{\zeta} p_{R_k} h_{R_kR_k}^R + p_S h_{SR_k}^R \sigma_{R_k}/\sqrt{2}  \nonumber\\
& & + \sqrt{\zeta} p_{R_k} h_{R_kR_k}^R \sigma_{R_k}/\sqrt{2} + p_S h_{SR_k}^I \sqrt{\zeta} p_{R_k} h_{R_kR_k}^I \nonumber\\
& & + p_S h_{SR_k}^I \sigma_{R_k}/\sqrt{2} + \sqrt{\zeta} p_{R_k} h_{R_kR_k}^I \sigma_{R_k}/\sqrt{2} \nonumber
\eeqn
Here the subscripts $R$ and $I$ (at $a^{R}$ and $a^{I}$) are the real and image of complex $a$.
Using the Cauchy–Schwarz Inequality \cite{Kadison52}, we have
\beqn
L/2 &\leq& p_S^2 \left[\left(h_{SR_k}^R\right)^2 + \left(h_{SR_k}^I\right)^2\right] \nonumber\\
& &+ \zeta p_{R_k}^2 \left[\left(h_{R_kR_k}^R\right)^2 + \left(h_{R_kR_k}^I\right)^2\right] + \sigma_{R_k}^2
\eeqn
So
\beqn
L/2 \leq p_S^2 \left|h_{SR_k}\right|^2 + \zeta p_{R_k}^2 \left|h_{R_kR_k}\right|^2 + \sigma_{R_k}^2 = G_k^{-2}
\eeqn

We use the approximation of $L \approx 2G_k^{-2}$ for the remaining proof.
So we have $\left|B\right|^2 = 3 p_{R_k}^2 \left|h_{R_kP}\right|^2$. 
Then we can rewrite $B$ as follows:
\beqn
B = \left|B\right| \angle{\phi_B} = \sqrt{3} p_{R_k} \left|h_{R_kP}\right| \angle{\phi_B}
\eeqn
The interference is now written as
\beqn
\mathcal{\bar{I}}^{\sf coh}_k\left(p_S, p_{R_k}\right) &=& \left|h_{SP} p_S + h_{R_kP} \sqrt{\zeta} p_{R_k} \right. \nonumber\\
& &\left.+ \sqrt{3} p_{R_k} \left|h_{R_kP}\right| \angle{\phi_B-\phi_{\sf opt}} \right|^2
\eeqn

We now prove that the constraint $\mathcal{\bar{I}}^{\sf coh}_k\left(p_S, p_{R_k}\right) \leq \mathcal{\bar I}_P$ is the convex set for variable $p_S$.
We can rewrite $\mathcal{\bar{I}}^{\sf coh}_k\left(p_S, p_{R_k}\right)$ as follows:
\beqn
\mathcal{\bar{I}}^{\sf coh}_k\left(p_S, p_{R_k}\right) = (h_{SP}^R p_S + F_1 p_{R_k})^2 + (h_{SP}^I p_S + F_2 p_{R_k})^2
\eeqn
where 
\beqn
F_1 &=& h_{R_kP}^R \sqrt{\zeta} + \sqrt{3} \left|h_{R_kP}\right| {\sf Cos} (\phi_B-\phi_{\sf opt})\\
F_2 &=& h_{R_kP}^I \sqrt{\zeta} + \sqrt{3} \left|h_{R_kP}\right| {\sf Sin} (\phi_B-\phi_{\sf opt})
\eeqn

We take the first-order and second-order partial derivatives of $\mathcal{\bar{I}}^{\sf coh}_k\left(p_S, p_{R_k}\right)$ with respect to $p_S$ as 
\beqn
\frac{\partial \mathcal{\bar{I}}^{\sf coh}_k}{\partial p_S} &=& 2 h_{SP}^R \left(h_{SP}^R p_S + F_1 p_{R_k}\right) \nonumber \\
& & + 2 h_{SP}^I \left(h_{SP}^I p_S + F_2 p_{R_k}\right)\\
\frac{\partial^2 \mathcal{\bar{I}}^{\sf coh}_k}{\partial p_S^2} &=& 2 \left(h_{SP}^R\right)^2 + 2 \left(h_{SP}^I\right)^2 = 2 \left|h_{SP}\right|^2\hspace{1.5cm}
\eeqn
Since $\frac{\partial^2 \mathcal{\bar{I}}^{\sf coh}_k}{\partial p_S^2} >0$, we can conclude that $\mathcal{\bar{I}}^{\sf coh}_k$ is a convex function for $p_S$.

Similarly, we prove that the constraint $\mathcal{\bar{I}}^{\sf coh}_k\left(p_S, p_{R_k}\right) \leq \mathcal{\bar I}_P$ is the convex set for variable $p_{R_k}$.
To do so, we take the first-order and second-order partial derivatives of $\mathcal{\bar{I}}^{\sf coh}_k\left(p_S, p_{R_k}\right)$ with respect to $p_{R_k}$ as 
\beqn
\frac{\partial \mathcal{\bar{I}}^{\sf coh}_k}{\partial p_{R_k}} &=& 2 F_1 (h_{SP}^R p_S + F_1 p_{R_k}) \nonumber\\
& & + 2 F_2 (h_{SP}^I p_S + F_2 p_{R_k})\\
\frac{\partial^2 \mathcal{\bar{I}}^{\sf coh}_k}{\partial p_{R_k}^2} &=& 2 F_1^2 + 2 F_2^2 
\eeqn
Since $\frac{\partial^2 \mathcal{\bar{I}}^{\sf coh}_k}{\partial p_{R_k}^2} >0$, we can conclude that $\mathcal{\bar{I}}^{\sf coh}_k$ is a convex function for $p_{R_k}$.
Hence we complete the proof.

\section{Proof of Lemma ~\ref{Lemma_coh2_zeta0}}
\label{Lemma02_zeta0}

We can easily confirm that the constraints in \textbf{Problem 4} are convex sets.
Because the constraint $\mathcal{\bar{I}}^{\sf coh}_k\left(p_S, p_{R_k}\right) \leq \mathcal{\bar I}_P$ is the convex set for variable $p_S$ and $p_{R_k}$ (see the proof in Appendix~\ref{Lemma5}); and the all of remaining are linear functions. 
To prove \textbf{Problem 4} is the convex optimization problem, we must prove that the function $\mathcal{\breve{C}}^{\sf coh}_k(p_S, p_{R_k})$ is a increased function for $p_S>0$ and $p_{R_k}>0$.
 
When $\hat{\zeta} = 0$, we can rewrite $\mathcal{\breve{C}}^{\sf coh}_k(p_S, p_{R_k})$ as
\beqn
\mathcal{\breve{C}}^{\sf coh}_k(p_S, p_{R_k}) = \frac{\left|h_{R_kD}\right|^2 p_{R_k}^2}{\sigma_D^2}.
\eeqn
We can easily observe that the quadratic function $\mathcal{\breve{C}}^{\sf coh}_k(p_S, p_{R_k})$ depends only on $p_{R_k}$ with the critical point of $p_{R_k}^* = 0$. 
Therefore $\mathcal{\breve{C}}^{\sf coh}_k(p_S, p_{R_k})$ is monotonically increasing in the range of $p_{R_k} > 0$.
Hence we can find the  globally optimal $p_{R_k}$ in its feasible range.
So we complete the proof of Lemma ~\ref{Lemma_coh2_zeta0}.

\bibliographystyle{IEEEtran}

\end{document}